\documentclass[conference]{IEEEtran}
\IEEEoverridecommandlockouts

\usepackage{cite}
\usepackage{amsmath,amssymb,amsfonts}
\usepackage{algorithmic}
\usepackage{graphicx}
\usepackage{textcomp}
\usepackage{xcolor}
\def\BibTeX{{\rm B\kern-.05em{\sc i\kern-.025em b}\kern-.08em
    T\kern-.1667em\lower.7ex\hbox{E}\kern-.125emX}}

\usepackage{graphicx}
\usepackage{amsthm}
\usepackage{algorithm}
\usepackage{algorithmic}

\usepackage{mathtools}
\usepackage{xcolor}
\usepackage{listings}

\usepackage{xcolor}
\usepackage{listings}

\lstset{
  escapeinside={(*@}{@*)}, 
  mathescape=true,          
  basicstyle=\ttfamily,     
}

\usepackage{tikz}
\usepackage{tikz-cd}
\usepackage{float}
\usepackage{color}

\newtheorem{theorem}{Theorem}

\newtheorem{definition}{Definition}[section]
\newtheorem{proposition}{Proposition}

\newcommand{\ket}[1]{\left \rvert #1 \right \rangle}

\newcommand{\remove}[1]{}
\newcommand{\mg}{\color{magenta}}

\newcommand{\SD}{{\sf D-CD}}
\newcommand{\PCD}{{\sf  pHE-CD}}


\lstset{
    basicstyle=\ttfamily\footnotesize,
    frame=single,
    xleftmargin=0.5cm,
    escapeinside={(*@}{@*)},
    language=Python,
    numbers=none, 
    aboveskip=5pt, 
    belowskip=5pt, 
    lineskip=5pt,  
    morekeywords={return, if, else, for, while}
}
\addtolength{\topskip}{2pt}
\begin{document}

\title{Hybrid Encryption with Certified Deletion in Preprocessing Model}

\author{%
  \IEEEauthorblockN{Kunal Dey}
  \IEEEauthorblockA{Department of Computer Science and Engineering\\
                    SRM University-AP\\
                    Andhra Pradesh, India\\
                    }
  \and
  \IEEEauthorblockN{Reihaneh Safavi-Naini}
\IEEEauthorblockA{Department of Computer Science\\
                    University of Calgary\\
                    Calgary, Alberta, Canada\\
                    }
}


\maketitle

\begin{abstract}
Encryption with certified deletion (CD) allows Alice to outsource data to Bob in encrypted form, and later either provide Bob with the decryption key allowing him to recover the data, or request deletion of the ciphertext, in which case,
Bob  can verifiably delete the ciphertext. In the latter case Boob can produce a certificate that, if  valid, ensures that even revealing the key does not leak any information about the plaintext to Bob.
The functionality, while impossible to realize using classical information alone can be achieved using quantum information. Encryption with CD  has been proposed both in the symmetric-key setting, using a one-time pad (OTP), and in the public-key setting (PKE-CD), where Bob has a key pair that can be reused.
PKE-CD  
uses computational hardness assumptions that are 
vulnerable to 
advances in 
computing.

In this work, we introduce and formalize {\em hybrid encryption with certified deletion in the preprocessing} model (pHE-CD) that avoids 
hardness assumptions, and propose two constructions. The constructions compose an information-theoretic key encapsulation mechanism (iKEM) with a data encapsulation mechanism (DEM) that provides certified deletion (DEM-CD) security, offering different security guarantees
 depending on the security properties of the DEM-CD. 
When DEM-CD is one-time information theoretically secure, the composition provides {\em information-theoretic security} for both encryption and certified deletion. When DEM-CD is computationally secure, 
the 
pHE-CD provides computational security 
for encryption 
and {\em everlasting security} for certified deletion, which 
means that confidentiality is computational prior to  verification of   deletion certificate, 
and becomes information theoretic, after  successful verification of 
certificate, guaranteeing that the data has been removed from the 
adversary's view in an information-theoretic sense. The  construction  is key efficient and uses a DEM-CD that is constructed using quantum coding and AES, providing quantum-resistant security for encryption without using any hardness assumption. 
%
%
Both pHE-CD schemes are for encryption of arbitrarily long messages.  We discuss our results and directions for future work. 
\end{abstract}


\section{Introduction}
Consider the scenario that Alice is outsourcing an encrypted file to Bob, and at a later time requires Bob to delete the file. Achieving a guarantee for Alice that Bob has actually deleted the file is impossible using only classical information, because digital data can be copied perfectly, making it vulnerable to future key leakage or to attacks on the encryption scheme.  By leveraging quantum information, in particular the no-cloning theorem~\cite{dieks1982communication, wootters1982single, park1970concept},  
Broadbent and Islam (BI)~\cite{broadbent2020quantum} 
proposed an elegant solution to this problem, known as encryption with {\em certified deletion} \cite{unruh2015revocable}. BI's scheme is a symmetric-key encryption scheme with \emph{information-theoretic security} that allows Bob to make one of two mutually exclusive choices: either (i) delete the ciphertext and produce a \emph{certificate of deletion} which, when verified by Alice, guarantees that the file is information-theoretically removed from Bob’s view, or (ii) decrypt the ciphertext and learn the plaintext, but \emph{not both}.
The scheme is particularly attractive because it employs \emph{Wiesner coding} \cite{wiesner1983conjugate}, a quantum encoding technique closely related to the BB84 quantum key distribution (QKD) protocol of Bennett and Brassard~\cite{bennett2014quantum}, making it implementable using today’s quantum technology \cite{tomamichel2012tight,bacco2013experimental, diamanti2016practical, hayashi2006practical}. 
The BI scheme, however, is one-time and, similar to the OTP~\cite{
shannon1949mathematical}, 
requires a fresh key to be  generated by Alice and sent to Bob 
for each encryption, with secret-key length 
at least the message length, plus additional overhead required for the quantum encoding and deletion verification.

\vspace{-0.8mm}
 Hiroka et al.\ proposed \emph{public-key encryption with certified deletion (PKE-CD)}~\cite{hiroka2021quantum}, for a cloud-centric model that allows an encryptor, say Alice, to encrypt a message for Bob using Bob's public key. The ciphertext is stored in the cloud, which can either delete it and produce a deletion certificate, or make it available to Bob for decryption, but not both. Alice can require the cloud to delete the ciphertext, and be assured that, if the deletion certificate is valid, no information about the message will be leaked to the cloud, even if Bob's private key subsequently becomes available to it. Alternatively, Alice can instruct the cloud to forward the ciphertext to Bob, in which case Bob can decrypt the message using his private key. (This can be done securely because of the no-cloning theorem \cite{dieks1982communication,wootters1982single,park1970concept}.) This separation of the ciphertext holder (the cloud) from the decryption key holder (Bob) is what gives certified deletion its power and allows Bob to {\em re-use} his key. Deletion security in~\cite{hiroka2021quantum} is against a computationally bounded adversary, whereas in~\cite{hiroka2024certified} is \emph{ certified everlasting}: prior to certificate validation the adversary is computationally bounded, but once a valid deletion certificate is issued, security holds against an unbounded adversary (that can have  access to decryption key also).

Bartusek et al.~\cite{bartusek2023cryptography} proposed a new approach to constructing PKE-CD with \emph{certified everlasting security} for one-bit messages using 2-out-of-2 quantum secret sharing with certified deletion. 
They also showed 
that the construction
can be extended to multi-bit messages by repeated application of the one-bit scheme.





All PKE-CD schemes use a post-quantum public-key encryption scheme 
for classical information that relies on computational hardness assumptions.

{\em Our work.} Our goal is to 
 replace the computational hardness assumptions in PKE-CD. 
We propose two constructions: one achieving 
information-theoretic deletion security, and one achieving 
\emph{everlasting} deletion security. Both constructions
rely on fundamental principles of quantum information theory. The 
second construction, additionally relies on the security of symmetric-key 
primitives, such as AES, against quantum polynomial-time (QPT) adversaries. 


\vspace{2pt}
{\em Our approach and outline of the schemes.}
We use {\em Hybrid Encryption in the correlated  randomness model}, also referred to as {\em Hybrid Encryption in pre-processing  model (pHE)} \cite{sharifian2021information}.
 pHE extends 
 the KEM/DEM hybrid encryption  system, that was defined in the public-key 
setting (HPKE) and was formalized in~\cite{cramer1998practical,Herranz2006KEMDEMNA} 
for efficient encryption of arbitrarily long messages, to the {\em correlated 
randomness setting}, where Alice and Bob have access to private correlated strings.
pHE  
has been defined 
with security against either a computationally unbounded or a computationally bounded adversary. We focus on 
\emph{information-theoretic KEM (iKEM)} which uses correlated randomness~\cite{maurer1993secret}  between Alice and Bob to establish a
shared key with information-theoretic security between them \cite{sharifian2021information}.
The DEM component in pHE can be implemented using an OTP
, resulting in a pHE 
with information-theoretic security,  or a {\em  computationally secure one-time symmetric key encryption system } 
that can be constructed using a block cipher algorithm such as AES, resulting in a  quantum-resistant  pHE. The composition theorem of pHE \cite[Theorem 2]{sharifian2021information} guarantees security of the composed systems. 

We adopt the pHE framework and compose an
iKEM 
with  {\em  DEMs
that support certified deletion (DEM-CD),}  thereby obtaining \emph{hybrid encryption scheme}s  with certified deletion in the preprocessing model (pHE-CD).  We define appropriate notions of encryption and deletion security for 
pHE-CD against both computationally unbounded and computationally bounded 
adversaries, and prove composition theorems for the security of the two 
constructions.  See Section~\ref{Sec:pHE-CD}.

\vspace{1pt}
\noindent


\section{Preliminaries}

A function $f(n)$ is called negligible, denoted $negl(n)$, if it 
approaches zero faster than the inverse of any polynomial in $n$. That 
is, for every constant $c > 0$, there exists $N_c$ such that for all 
$n > N_c$, we have
$f(n) < \frac{1}{n^c}$.

Let $z \leftarrow Z$ denote 
sampling an element $z$ uniformly at random from a finite set $Z$, and let $z \gets A(x)$ denote 
that $z$ is assigned the output of the 
algorithm $A$ (quantum, probabilistic, or deterministic) on input $x$.

\noindent {\em Quantum background.} A qubit is represented as a unit vector in a two-dimensional Hilbert space $\mathcal{H}$. Two commonly used orthonormal basis for qubit representation are computational basis $\mathbf{Z} =\{|0\rangle, |1\rangle\}$, where $|0\rangle = (1,0)^T$ and $|1\rangle = (0,1)^T$, and Hadamard basis $\mathbf{X} =\{|+\rangle, |-\rangle\}$ with $|+\rangle = \tfrac{1}{\sqrt{2}}(|0\rangle + |1\rangle)$ and $|-\rangle = \tfrac{1}{\sqrt{2}}(|0\rangle - |1\rangle)$.
The $X$ basis is mutually unbiased with respect to the computational basis.

{\em Wiesner’s conjugate coding}, encodes a bit $x \in \{0,1\}$ 
in basis $\theta \in \{0,1\}$ as $|x\rangle_{\theta}$, where $\theta=0$ and $\theta=1$ corresponding to computational and Hadamard basis, respectively.
This gives 
 four 
possible 
 states: $|0\rangle_{0} = |0\rangle, |1\rangle_{0} = |+\rangle, |0\rangle_{1} = |1\rangle, |1\rangle_{1} = |-\rangle$. The measurement of a qubit $|\phi\rangle$ in basis $\theta \in \{\mathbf{Z}, \mathbf{X}\}$ returns outcome $x \in \{0,1\}$ with probability $|\langle x_{\theta} | \phi \rangle|^2$ and projects the state onto $|x_{\theta}\rangle$.

\remove{\color{olive} A qubit is a unit vector in a two-dimensional Hilbert space $\mathcal{H}$ with orthonormal computational basis $Z =\{|0\rangle, |1\rangle\}$, where $|0\rangle = (1,0)^T$ and $|1\rangle = (0,1)^T$. Another basis is the Hadamard basis $X =\{|+\rangle, |-\rangle\}$ with $|+\rangle = \tfrac{1}{\sqrt{2}}(|0\rangle + |1\rangle)$ and $|-\rangle = \tfrac{1}{\sqrt{2}}(|0\rangle - |1\rangle)$, which is mutually unbiased with the computational basis. In Wiesner’s conjugate coding, a bit $x \in \{0,1\}$ is encoded in basis $\theta \in \{0,1\}$ as $|x\rangle_{\theta}$, where $\theta=0$ (computational) and $\theta=1$ (Hadamard), yielding four possible states: $|0\rangle_{0} = |0\rangle, |1\rangle_{0} = |+\rangle, |0\rangle_{1} = |1\rangle, |1\rangle_{1} = |-\rangle$. A measurement of a qubit $|\phi\rangle$ in basis $\theta \in {Z, X}$ returns outcome $x \in \{0,1\}$ with probability $|\langle x_{\theta} | \phi \rangle|^2$ and projects the state onto $|x_{\theta}\rangle$.

}




\noindent We give an overview of 
the required
cryptographic systems. 
\noindent \textbf{Hybrid Encryption in Preprocessing Model (pHE).} 
pHE \cite{sharifian2021information} 
 considers a setting  
where Alice and Bob have  samples of  correlated random variables that are partially leaked to Eve.
An 
  iKEM is 
defined  
analogously
 to a public-key KEM, but with security 
against a computationally unbounded adversary.

\begin{definition}[Information-Theoretic KEM (iKEM) \cite{sharifian2021information}]\label{Def:iKEM}
 \vspace{-1pt}
For a security parameter $\lambda$,  
 a public distribution $P(X,Y,Z)$ and a key space $\{0,1\}^{iK.Len(\lambda)}$, 
an   iKEM {\sf iK} is defned by a tuple of algorithms {\sf (iK.Gen, iK.Enc, iK.Dec)} where 
${\sf iK.Gen}(1^\lambda, P)$ generates private correlated samples $(X, Y, Z)$ for Alice, Bob, and Eve, respectively; ${\sf iK.Enc}(X)$ takes $X$ and outputs 
a uniformly random key $K \in \{0,1\}^{iK.Len(\lambda)}$ and  a ciphertext $C$; and 
${\sf iK.Dec}(Y, C)$  that takes $Y$ and $C$, and outputs  $K$ or $\perp$.
%
\vspace{-3pt}
\end{definition}

Security of iKEM is defined against a {\em computationally unbounded adversary} that may query the {\em encapsulation} and {\em decapsulation oracles}, where each oracle implements the corresponding algorithm using the respective private variables 
$X $and 
$Y$. 
Security is defined against  three types of  adversaries: {\em OT (One-Time) adversary}, {\em $q_e$-CEA (Chosen Encapsulation Attack) Adversary}, and {\em $(q_e, q_d)$-CCA (Chosen Ciphertext Attack) Adversary} \cite[Definition 5]{sharifian2021information}.
\vspace{-2pt}
\begin{definition}[Data Encapsulation Mechanism (DEM)]\label{def:dem}
A DEM with security parameter $\lambda$, key space $\mathcal{K} = \{0,1\}^{Dem.Len(\lambda)}$, and an unrestricted message space is defined in terms of 
three algorithms: 
${\sf DEM.Gen}(1^\lambda)$ generates a key $K \in \mathcal{K}$; 
${\sf DEM.Enc}(M, K)$ encrypts a message $M$ using the key $K$ and produces a ciphertext $C$; 
${\sf DEM.Dec}(C, K)$ decrypts $C$ using the same key and returns $M$ or $\perp$. The first two algorithms are randomized, and the last one is deterministic.
 \vspace{-3pt}
\end{definition}
\vspace{-1mm}
 We consider DEM 
with perfect correctness.  Security of the DEM is defined using computational indistinguishability notions such 
as IND-OT and IND-OTCCA, under attack models including CPA, CCA1, and 
CCA2 \cite[Definition 2.5]{herranz2006kem}. 

\vspace{1pt}
\noindent
{\em Hybrid Encryption in Preprocessing Model (pHE)} consists of an iKEM and a DEM with compatible key spaces. pHE securely encrypts (and decrypts) messages of unrestricted length assuming an initial public distribution $P(XYZ)$ and using the algorithms of  
its building blocks, an iKEM and a DEM.

Security of pHE   can be defined against a computationally unbounded, or bounded, adversaries  using DEMs with corresponding  notions of security (indistinguishability aganist OT  and $q_e$-CPA, $(q_e,q_d)$-CCA adversaries)   \cite{sharifian2021information,panja2025hybrid}.

 \vspace{-2mm}
\begin{theorem}[{\cite[Theorem 2]{sharifian2021information}}]
\label{Composable security}
   Let ${\sf iK}$ be $\sigma(\lambda)$-IND-$q_e$-CEA secure (information-theoretic) and ${\sf D}$ be $\sigma'(\lambda)$-IND-OT secure (computational). Then the resulting hybrid scheme ${\sf pHE}$ is $[\sigma(\lambda)+\sigma'(\lambda)]$-IND-$q_e$-CPA secure against any computationally bounded adversary.
\end{theorem}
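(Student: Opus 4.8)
The plan is a two-step game-hopping argument that decouples the iKEM and DEM components. Let $G_0$ denote the real IND-$q_e$-CPA game of Definition~\ref{def: pHE-sec}, in which the challenge is the pair $(C_1^*, C_2^*)$ with $(C_1^*, K^*) \gets {\sf iK.Encap}(X)$ and $C_2^* \gets {\sf D.Encap}(K^*, M_b)$, and the phase-1 oracle $O_1 = {\sf pHE.Enc}(X,\cdot)$ is answered honestly. I introduce an intermediate game $G_1$ that is identical to $G_0$ except that the key driving the challenge DEM encryption is swapped for a fresh independent uniform key: $\hat K \gets \{0,1\}^{{\sf HE.Len}(\lambda)}$ and $C_2^* \gets {\sf D.Encap}(\hat K, M_b)$, while $C_1^*$ remains the honest iKEM ciphertext and the encryption oracle is still answered with honest, freshly encapsulated keys. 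The triangle inequality then splits $Adv^{ind\text{-}q_e\text{-cpa}}_{{\sf pHE},\mathcal{A}}(\lambda)$ into the cost of the $G_0 \to G_1$ transition plus $\mathcal{A}$'s residual advantage inside $G_1$; I will bound the former by $\sigma(\lambda)$ and the latter by $\sigma'(\lambda)$.

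For the first transition I would construct an IND-$q_e$-CEA distinguisher $B$ against ${\sf iK}$ that embeds $\mathcal{A}$. Given Eve's side information $Z$, a challenge pair $(C^*, K^*)$ with $K^*$ real or uniform, and oracle access to ${\sf iK.Encap}(X,\cdot)$, the distinguisher $B$ forwards $Z$ to $\mathcal{A}_1$ and answers each of $\mathcal{A}$'s encryption queries on a message $M$ by calling its own encapsulation oracle to obtain $(C_1, K)$ and returning $(C_1, {\sf D.Encap}(K, M))$. On the challenge it hands $(C^*, {\sf D.Encap}(K^*, M_b))$ to $\mathcal{A}_2$ and echoes $\mathcal{A}$'s final bit. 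When $K^*$ is the genuine encapsulated key, $B$ reproduces $G_0$ exactly; when $K^*$ is uniform, it reproduces $G_1$. Since $B$ issues exactly one encapsulation query per $\mathcal{A}$-query it stays inside the $q_e$ budget of the CEA game, and the $G_0 \to G_1$ gap is controlled by the iKEM advantage, at most $\sigma(\lambda)$. Alternatively, Lemma~\ref{IND-CEA iKEM} gives directly that $(Z, C^*, K^*)$ and $(Z, C^*, U_{{\sf HE.Len}(\lambda)})$ are $\sigma(\lambda)$-close in statistical distance, so the key swap perturbs $\mathcal{A}$'s view by at most $\sigma(\lambda)$.

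For the residual advantage in $G_1$, the challenge key $\hat K$ is now uniform and independent of everything in $\mathcal{A}$'s view, so the challenge collapses to a single DEM encryption under an unknown one-time key. I would build an IND-OT adversary $B'$ against ${\sf D}$ that generates the setup $(X,Y,Z) \gets {\sf iK.Gen}(1^\lambda, P)$ itself, hands $Z$ to $\mathcal{A}_1$, and simulates all $q_e$ encryption-oracle queries locally by running the honest ${\sf iK.Encap}$ and ${\sf D.Encap}$ under the $X$ it holds. When $\mathcal{A}_1$ outputs $(M_0, M_1)$, $B'$ relays this pair to its own one-time DEM challenger, receives $C_2^* = {\sf D.Encap}(\hat K, M_b)$, appends an independently generated honest ciphertext $C_1^* = {\sf iK.Encap}(X).ct$, and returns $(C_1^*, C_2^*)$ to $\mathcal{A}_2$, echoing the guess. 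This is a perfect simulation of $G_1$, so $\mathcal{A}$'s advantage there equals $B'$'s DEM advantage, at most $\sigma'(\lambda)$. Combining the two bounds yields $Adv^{ind\text{-}q_e\text{-cpa}}_{{\sf pHE},\mathcal{A}}(\lambda) \le \sigma(\lambda) + \sigma'(\lambda)$.

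The step I expect to be the main obstacle, which is really the crux of correctness rather than a hard calculation, is reconciling the mismatch in oracle budgets between the two primitives: $\mathcal{A}$ is entitled to $q_e$ encryption queries, yet ${\sf D}$ is assumed only one-time secure. The argument survives because in $G_1$ every encryption-oracle answer is manufactured from a fresh, independent iKEM encapsulation and never touches the single DEM challenge key; the multi-query load is absorbed entirely by the $q_e$-CEA security of the iKEM, leaving the DEM to defend only the lone challenge ciphertext. Care is also needed to verify that $B$ never exceeds its $q_e$ encapsulation quota and that $B'$ can faithfully simulate the encryption oracle without knowing $\hat K$, both of which hold by the construction above.
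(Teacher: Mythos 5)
Your proposal follows essentially the same route as the proof of this statement: the paper itself does not reprove the theorem (it is imported from Sharifian et al.), but the identical two-game structure appears in the paper's own proof of Theorem~\ref{Thm: pHECD compose} --- your hop $G_0 \to G_1$ swapping the encapsulated key for a fresh uniform one, the CEA reduction that answers each of $\mathcal{A}$'s $q_e$ encryption queries with one encapsulation query, and the IND-OT reduction that samples $(X,Y,Z)$ itself and attaches an honest $C_1^*$ are exactly that argument, and your closing remark about the one-time DEM never having to absorb the multi-query load is indeed the crux.

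There is, however, one accounting slip relative to the bound you claim. Under the distinguishing-advantage definitions used in this paper, the quantity to bound is $|\Pr[G_0(\lambda,0)=1]-\Pr[G_0(\lambda,1)=1]|$, and $G_1$ sits \emph{between} the two endpoints, so the triangle inequality forces you to pay the $G_0\!\to\!G_1$ hop twice, once for $b=0$ and once for $b=1$. Writing $p_b=\Pr[G_0(\lambda,b)=1]$ and $q_b=\Pr[G_1(\lambda,b)=1]$, your reduction $B$ (equivalently, the Lemma~\ref{IND-CEA iKEM} statistical-distance argument) bounds each single difference $|p_b-q_b|$ by $\sigma(\lambda)$, so what your argument actually establishes is $2\sigma(\lambda)+\sigma'(\lambda)$, not $\sigma(\lambda)+\sigma'(\lambda)$; compare the paper's own analogous computation at the end of the proof of Theorem~\ref{Thm: pHECD compose}, which arrives at precisely $2\,\mathrm{Adv}^{\text{ind-}q_e\text{-cea}}_{\mathrm{iK}}+\mathrm{Adv}^{\text{ev-cd}}$ for the same reason. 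Folding a random challenge bit into $B$ does not rescue the constant: if $B$ outputs the indicator of $b'=b$, its advantage is $\tfrac12\,|(p_1-p_0)-(q_1-q_0)|$, which again yields only $|p_0-p_1|\le \sigma'(\lambda)+2\sigma(\lambda)$. The discrepancy is a normalization artifact (it dissolves under a guessing-advantage convention $|\Pr[b'=b]-\tfrac12|$, where the hop costs $\sigma(\lambda)$ once) and does not affect the asymptotic conclusion that {\sf pHE} is secure, but as written your final inequality overstates the tightness of the bound by the constant on $\sigma(\lambda)$.
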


\noindent \textbf{SKE with Certified Deletion.} We review 
symmetric-key encryption with 
certified deletion property, 
introduced in \cite{hiroka2024certified}.

\vspace{-3pt}
\begin{definition}[Syntax]
\label{Def:SKECD}
    A SKE with certified deletion {\sf SKE-CD = (SKE-CD.KeyGen, SKE-CD.Enc, SKE-CD.Dec, SKE-CD.Del, SKE-CD.Vrfy)} is a tuple of QPT algorithms with a security parameter $\lambda$, plaintext space $\mathcal{M}$ and key space $\mathcal{K}$.

      \noindent
    --{\sf SKE-CD.KeyGen}$(1^\lambda)\rightarrow K$: The (classical) algorithm generates a uniformly random (classical) key $K$ from 
    $\mathcal{K}$.\\
 --{\sf SKE-CD.Enc}$(K,m) \rightarrow (CT,vk)$: 
    The quantum algorithm generates a quantum ciphertext ${CT}$ and a verification key $vk$, given a classical message $m \in \mathcal{M}$ and a secret key $K$.\\
 --{\sf SKE-CD.Dec}$({CT},K) \rightarrow (m^\prime \text{or} \perp)$: The quantum algorithm outputs m$^\prime$ or $\perp$, given a quantum ciphertext ${CT}$ and the secret key $K$. \\
  --{\sf SKE-CD.Del}$({CT})\rightarrow cert$: The 
  algorithm outputs a {\em classical deletion certificate $cert$  } given a quantum ciphertext ${CT}$.\\
 --{\sf SKE-CD.Vrfy}$(cert, vk) \rightarrow (\top~\text{or}~\bot)$: The classical algorithm outputs $\top$ or $\bot$, given two classical strings $cert$ and $vk$.

\remove{\begin{enumerate}
    \item {\sf SKE-CD.KeyGen}$(1^\lambda)\rightarrow K$: The classical algorithm generates a classical key $K$ that follows a uniform distribution in $\mathcal{K}$.
    \item {\sf SKE-CD.Enc}$(K,m) \rightarrow (CT,vk)$: 
    The quantum algorithm generates a quantum ciphertext ${CT}$ and a verification key $vk$, given a classical message $m \in \mathcal{M}$ and a secret key $K$.
    \item {\sf SKE-CD.Dec}$({CT},K) \rightarrow (m^\prime \text{or} \perp)$: The quantum algorithm outputs m$^\prime$ or $\perp$, given a quantum ciphertext ${CT}$ and the secret key $K$. 
    \item {\sf SKE-CD.Del}$({CT})\rightarrow cert$: The quantum algorithm outputs a classical deletion certificate $cert$ given a quantum ciphertext ${CT}$.
    \item {\sf SKE-CD.Vrfy}$(cert, vk) \rightarrow (\top \text{or} \bot)$: The classical algorithm outputs $\top$ or $\bot$, given two classical strings $cert$ and $vk$.
   
     \end{enumerate}}
\end{definition}

\begin{definition}[Correctness]
\label{Def: SKECD correct}
Correctness is defined for decryption (standard definition of encryption correctness with perfect decryptability) and deletion verification  as below. 

%
  %
\noindent \textbf{Verification correctness}: For any $\lambda \in \mathbb{N}, m \in \mathcal{M}$, given:
\begin{align*}
(K) &\leftarrow \text{\sf SKE-CD.KeyGen}(1^\lambda),\\
(CT,vk) &\leftarrow \text{\sf SKE-CD.Enc}(K,m),\\
\mathit{cert} &\leftarrow \text{\sf SKE-CD.Del}(CT).
\vspace{-2mm}
\end{align*}
\vspace{-2mm}
we have 
$
\Pr\big[\text{\sf SKE-CD.Vrfy}(vk,\mathit{cert})=\top\big]=1.
$
\end{definition}

\noindent \textbf{Security:} 
We consider one-time encryption security of SKE-CD against both computationally unbounded
\cite[Definition 2.11]{broadbent2020quantum} 
and computationally bounded, adversaries \cite{hiroka2021quantum}.
%
These are in-line with the corresponding  
IND-OT security definitions 
for DEM \cite{herranz2006kem}. (The definitions can be extended to multi-time security \cite[Definition C.6]{hiroka2024certified}.)

\begin{definition}[One-time Everlasting Certified Deletion Security \cite{hiroka2024certified}]
\label{Def: SKE-CD EV-CD}
Let {\sf SKE-CD = ( SKE-CD.KeyGen,  SKE-CD.Enc,  SKE-CD.Dec,  SKE-CD.Del,  SKE-CD.Vrfy)} be    an SKE with certified deletion. 
We define the following experiment for \emph{ one-time IND-OT  
everlasting certified deletion} security,
denoted by $\text{Exp}_{\text{SKE-CD}, \mathcal{A}}^{\text{ev-cd}}(\lambda, b)$, for a two stage adversary $\mathcal{A} = (\mathcal{A}_1, \mathcal{A}_2)$ where $\mathcal{A}_1$ is computationally bounded and $\mathcal{A}_2$ is computationally unbounded.
\begin{enumerate}
    \item The challenger computes $K \leftarrow \text{\sf SKE-CD.KeyGen}(1^\lambda)$.
    \item $\mathcal{A}_1$ sends $(m_0, m_1)$ to the challenger.
    \item The challenger  does
    $({CT},vk) \leftarrow \text{\sf SKE-CD.Enc}(K, m_b)$ and sends ${CT}$ to $\mathcal{A}_1$.
    \item At some point, $\mathcal{A}_1$ sends $cert$ to the challenger and sends the internal state to $\mathcal{A}_2$.
    \item The challenger computes $\text{\sf SKE-CD.Vrfy}(cert,vk)$. If the output is $\bot$, the challenger sends $\bot$ to $\mathcal{A}_2$. If the output is $\top$, the challenger sends $K$ to $\mathcal{A}_2$.
    \item $\mathcal{A}_2$ outputs its guess $b' \in \{0, 1\}$. 
    
\end{enumerate}
The experiment outputs $b'$.
Let $\text{Adv}_{\text{SKE-CD}, \mathcal{A}}^{\text{ot-cd}}(\lambda)$ be the advantage {\mg of } the adversary $\mathcal{A}$ in the above experiment. We define {\sf SKE-CD} to be one-time EV-CD secure if, for any adversary $\mathcal{A}$, it holds that
{\small
\begin{align*}
    \text{Adv}_{\text{SKE-CD}, \mathcal{A}}^{\text{ev-cd}}(\lambda) 
    & \triangleq \bigl|\Pr[\text{Exp}_{\text{SKE-CD}, \mathcal{A}}^{\text{ev-cd}}(\lambda, 0) = 1] \\
    &\quad - \Pr[\text{Exp}_{\text{SKE-CD}, \mathcal{A}}^{\text{ev-cd}}(\lambda, 1) = 1]\bigr| 
    \leq \text{negl}(\lambda).
\end{align*}
}
\end{definition}

\section{Hybrid Encryption with Certified Deletion in Preprocessing Model (pHE-CD)}
\label{Sec:pHE-CD}

We define \emph{hybrid encryption with certified deletion in the preprocessing model} (pHE-CD) and 
its associated security notions. 
We  begin by introducing 
data encapsulation mechanism with certified deletion (DEM-CD).\\
\noindent  \textbf{DEM with Certified Deletion.} 
A data encapsulation mechanism with certified deletion (DEM-CD) follows the same syntax as the SKE-CD scheme defined in Definition~\ref{Def:SKECD}, while adopting the standard encryption security notion of DEM. In this work, we consider DEM-CD schemes where 
deletion security is defined using the notion of one-time everlasting certified deletion security for SKE-CD (Definition~\ref{Def: SKE-CD EV-CD}).
\vspace{-1mm}
\begin{definition}[pHE-CD] For a security parameter, $\lambda$, let {\sf iK} = ({\sf iK.KeyGen, iK.Encap, iK.Decap}) and {\sf \SD = (\SD.KeyGen, \SD.{Encap}, \SD.Decap, \SD.{Del}, \SD.Vrfy)} represent a iKEM and a DEM-CD scheme, respectively, with the same key space $\mathcal{K} = \{0,1\}^{\text{pHE.Len}(\lambda)}$. For a public  distribution $P(XYZ)$, a hybrid encryption with certified deletion scheme in Preprocessing Model, denoted by {\sf \PCD} = ({\sf \PCD.KeyGen}, {\sf \PCD.\text{Enc}}, {\sf \PCD.\text{Dec}}, {\sf \PCD.\text{Del}}, {\sf \PCD.Vrfy}) is given in Figure \ref{fig:pHE-CD}\footnote{Note that in the ciphertext $CT = (C_1, C_2)$, the iKEM ciphertext $C_1$ is classical, while the DEM-CD ciphertext $C_2$ is quantum. Deletion algorithm is performed solely on the quantum component.}, where the message space is an unrestricted message space $\{0, 1\}^*$ for all $\lambda$. 
\begin{figure}
\centering
\footnotesize
\setlength{\tabcolsep}{3pt} 
\begin{tabular}{@{}p{0.47\columnwidth} p{0.47\columnwidth}@{}}
$\mathbf{Alg}\ \text{\sf \PCD.KeyGen}(1^\lambda,P)$ 
& 
$\mathbf{Alg}\ \text{\sf \PCD.Enc}(X,m)$
\\ 
$(X,Y,Z)\stackrel{\$}\gets \text{\sf iK.KeyGen}(1^\lambda)$
&
$(K,C_1)\stackrel{\$}\gets \text{\sf iK.Encap}(X)$
\\
Return $(X,Y,Z)$
&
$(vk,C_2)\gets \text{\sf \SD.Encap}(m,K)$
\\
&
Return $(vk,CT=(C_1,C_2))$
\\
\\
$\mathbf{Alg}\ \text{\sf \PCD.Dec}(Y,CT)$
&
$\mathbf{Alg}\ \text{\sf \PCD.Del}(CT)$
\\ 
Parse $CT=(C_1,C_2)$
&
Parse $CT=(C_1,C_2)$
\\
$K\gets \text{\sf iK.Decap}(Y,C_1)$
&
$cert \gets \text{\sf \SD.Del}(C_2)$
\\
If $K=\perp$: Return $\perp$
&
Return $cert$
\\
Else $m\gets \text{\sf \SD.Decap}(K,C_2)$
&
\\
Return $m$
&
\\
\end{tabular}

\vspace{0.2cm}

\centering
{\begin{tabular}{l}
$\mathbf{Alg}\ \text{\sf \PCD.Vrfy}(vk,cert)$
\\ 
Return $(\top \text{ or } \bot)\gets \text{\sf \SD.Vrfy}(cert,vk)$
\end{tabular}}

\caption{Hybrid encryption with certified deletion in preprocessing model}
\label{fig:pHE-CD}

\end{figure}
\vspace{-3mm}
\end{definition}

\noindent \textbf{Security:} pHE-CD is required to satisfy two security notions: (i) encryption security and (ii) certified deletion security. For encryption, we adopt the standard IND-$q_e$-CPA security notion for pHE. For certified deletion security we define {\em Everlasting Certified Deletion Security of pHE-CD against $q_e$ encryption queries (EV-$q_e$-CD)}, which is formalized below.
\vspace{-2mm}
\begin{definition}[Everlasting Certified Deletion Security of  {\sf \PCD}: EV-$q_e$-CD]
\label{Def: pHE-CD sec}
 For a   security parameter $\lambda$,  let {\sf \PCD} = ({\sf \PCD.KeyGen}, {\sf \PCD.{Enc}}, {\sf \PCD.{Dec}}, {\sf \PCD.{Del}}, {\sf \PCD.Vrfy}) be a hybrid encryption {rd system} in preprocessing model that provides 
the certified deletion property, and is constructed using an iKEM 
{\sf iK} and a DEM-CD 
{\sf \SD}. We consider experiment $\text{Exp}_{\text{\PCD}, \mathcal{A}}^{\text{ev-}q_e \text{-cd}}(\lambda, b)$ for {\em everlasting certified deletion security, where the adversary has access to
 $q_e$ encryption queries}. The experiment defines a game which is played between a challenger and a two-stage adversary $\mathcal{A} = (\mathcal{A}_1,\mathcal{A}_2)$, where $\mathcal{A}_1$ is a QPT adversary and $\mathcal{A}_2$ is an unbounded quantum adversary. The game proceeds as follows.
\begin{enumerate}
    \item The challenger samples,\\ $(X,Y,Z) \leftarrow \text{\sf \PCD.KeyGen}(1^\lambda,P)$, and provides $Z$ to the stage 1 adversary, $\mathcal{A}_1$.
    \item $\mathcal{A}_1$ has access to $q_e$ queries to the encryption oracle $\text{\sf \PCD.{Enc}}(X)$. 
    \item $\mathcal{A}_1$ provides two messages $(m_0, m_1)$ to the challenger.
    \item The challenger computes $({vk}, {CT}) \leftarrow \text{\sf \PCD.Enc}(X, m_b)$ and sends ${CT}$ to $\mathcal{A}_1$. Note that ${CT} = (C_1, {C_2})$, $(K, C_1) \leftarrow {\sf iK.Encap}(X)$ and $(vk,C_2) \leftarrow \text{\sf \SD.Encap}(m_b,K)$.
    
    \item  In stage 2, 
    $\mathcal{A}_1$ provides the deletion certificate, ${cert}$, to the challenger, and passes its internal state to $\mathcal{A}_2$.
    \item The challenger verifies the certificate using, $\text{\sf \PCD.vrfy}({vk}, {cert})$. If the output is $\bot$, it sends $\bot$ to $\mathcal{A}_2$. If the output is $\top$, the challenger sends $K$ to $\mathcal{A}_2$.
    \item $\mathcal{A}_2$ outputs its guess bit, $b' \in \{0, 1\}$.
    \item The experiment outputs $b'$ if the challenger outputs $\top$; otherwise, it outputs $\bot$.
\end{enumerate}
Let $\text{Adv}_{\text{\PCD}, \mathcal{A}}^{\text{ev-}q_e \text{-cd}}(\lambda)$ be the advantage of the experiment above. We say that the {\em {\sf \PCD} is $\text{EV-}q_e \text{-CD}$ secure} if for any two-stage quantum adversary $\mathcal{A}$ in the above experiment, it holds that,
\[
\begin{split}
\text{Adv}_{\text{\PCD}, \mathcal{A}}^{\text{ev-}q_e \text{-cd}}(\lambda) 
&\triangleq \bigl|\Pr[\text{Exp}_{\text{\PCD}, \mathcal{A}}^{\text{ev-}q_e \text{-cd}}(\lambda, 0) = 1] \\
&\quad - \Pr[\text{Exp}_{\text{\PCD}, \mathcal{A}}^{\text{ev-}q_e \text{-cd}}(\lambda, 1) = 1]\bigr| \leq \text{negl}(\lambda).
\end{split}
\]
\end{definition}

We  
define  pHE-CD with information-theoretic certified deletion security 
by allowing 
$\mathcal{A}_1$ in the above definition to be a  computationally  unbounded quantum adversary;  all other parts of the definition remain unchanged. 
\vspace{-1mm}
\begin{theorem}[{Encryption Security}]
\label{Thm: pHECD encryption}
    
    If {\sf iK} is information-theoretically secure $\text{IND-}q_e\text{-CEA}$  and {\SD} is IND-OT secure, either computationally or information-theoretically, then {\PCD} is $\text{IND-}q_e\text{-CPA}$ secure, respectively computationally or information-theoretically.
\end{theorem}
\vspace{-2mm}
The encryption security of the composed pHE-CD, follows directly from Theorem~\ref{Composable security}, resulting in an $\text{IND-}q_e \text{-CPA}$ secure {\PCD}. The following theorem {\mg state} deletion security of the composition of an IND-$q_e$-CEA secure iKEM and a one-time EV-CD secure DEM-CD, resulting in an EV-$q_e$-CD secure pHE-CD. Security proof reduces security of pHE-CD to the security of its two components. 




\begin{theorem}[{Certified Deletion Security}]
\label{Thm: pHECD compose}
    If {\sf iKEM} is 
    $\text{IND-}q_e \text{-CEA}$ secure and {\SD} is one-time EV-CD secure (information theoretic or computational), then {\PCD} is $\text{EV-}q_e \text{-CD}$ secure (information theoretic or computational, respectively) the security.
\end{theorem}
\vspace{-2mm}
\noindent 
{\em Proof overview:} The proof follows a standard hybrid-game argument showing that any efficient adversary that breaks EV-$q_e$-CD security of  pHE-CD can be transformed into an efficient adversary that breaks either the iKEM or the DEM-CD security. We define two games: $G_0$, which is identical to the deletion security game defined by the EV-$q_e$-CD experiment (Definition \ref{Def: pHE-CD sec}) for pHE-CD, and $G_1$, which differs only in that the DEM-CD encryption uses an independently sampled key instead of the iKEM-encapsulated key. If an adversary can distinguish $G_0$ from $G_1$, then one can construct an adversary that breaks the IND-$q_e$-CEA security of the iKEM, bounding the difference between the success probabilities of the adversary in the two games by the advantage of the iKEM adversary in winning its IND-$q_e$-CEA game. We also note that in $G_1$, the DEM-CD key is independent of the iKEM ciphertext and so distinguishing encryptions of the two messages after verification of deletion certificate, yields an adversary that breaks the one-time EV-CD security of DEM-CD, leading to  a bound on the success probabilities of the adversary in the games. Combining these bounds via the triangle inequality shows that the overall advantage against pHE-CD is at most twice the iKEM advantage plus the DEM-CD advantage, which is negligible under the stated assumptions; hence pHE-CD is EV-$q_e$-CD secure. We now give the full proof.

\noindent \textbf{Notations for challenger and adversary.} Since the theorem is stated for both the computational and the information-theoretic case, we let $\mathcal{C}$ denote the computational class (QPT, or computationally unbounded) for which the EV-$q_e$-CD security of {\PCD} is claimed in a given instance of the theorem; that is, $\mathcal{C}$ is QPT when {\SD} is computationally one-time EV-CD secure (Construction~2, Section~\ref{sec: construction}), and $\mathcal{C}$ is the class of computationally unbounded algorithms when {\SD} is information-theoretically one-time EV-CD secure (Construction~1, Section~\ref{sec: construction}).
\begin{itemize}
    \item \textbf{pHE-CD}: The challenger is denoted $ch$; the adversary is $\mathcal{A} = (\mathcal{A}_1, \mathcal{A}_2)$, where $\mathcal{A}_1 \in \mathcal{C}$ and $\mathcal{A}_2$ is an adversary with unbounded quantum computation (as required by Definition~\ref{Def: pHE-CD sec}).
    \item \textbf{iKEM}: The challenger is denoted $ch^\text{iK}$; the adversary is $\mathcal{A}^\text{iK} = (\mathcal{A}^\text{iK}_1, \mathcal{A}^\text{iK}_2)$. Since {\sf iK} security is unconditional (Definition~\ref{Def:iKEM}), both $\mathcal{A}^\text{iK}_1$ and $\mathcal{A}^\text{iK}_2$ are computationally unbounded.
    \item \textbf{DEM-CD}: The challenger is denoted $ch^\text{D-CD}$; the adversary is $\mathcal{A}^\text{D-CD}=(\mathcal{A}_1^\text{D-CD}, \mathcal{A}_2^\text{D-CD})$, where $\mathcal{A}_1^\text{D-CD} \in \mathcal{C}$ and $\mathcal{A}_2^\text{D-CD}$ is computationally unbounded.
\end{itemize}
As shown below, both reductions we construct, $\mathcal{A}^\text{iK}$ and $\mathcal{A}^\text{D-CD}$, run $\mathcal{A}_1$ as a black-box subroutine without performing any further computation besides the bookkeeping needed to relay oracle queries and ciphertexts. Consequently $\mathcal{A}_1^{\text{iK}}$ and $\mathcal{A}_1^{\text{D-CD}}$ belong to the same class $\mathcal{C}$ as $\mathcal{A}_1$, so the reduction is meaningful, and tight, in both the computational and the information-theoretic case; this is what allows the single argument below to establish both parts of the theorem.

\begin{proof}
To prove the theorem, we show that for any adversaries $\mathcal{A}$, $\mathcal{A}^\text{iK}$, and $\mathcal{A}^\text{D-CD}$ as defined above, attacking {\PCD}, {\sf iKEM}, and {\sf DEM-CD} respectively,
\begin{align*}
    \text{Adv}_{\text{\PCD}, \mathcal{A}}^{\text{ev-}q_e \text{-cd}}(\lambda) \le 2\text{Adv}_{\text{iK}, \mathcal{A}^\text{iK}}^{\text{ind-}q_e \text{-cea}}(\lambda)+\text{Adv}_{\text{\SD}, \mathcal{A}^\text{\SD}}^{\text{ev-cd}}(\lambda).
\end{align*}
To show this we consider the following two games, $G_0(\lambda, b)$ and $G_1(\lambda, b)$.\\

\noindent $G_0(\lambda, b)$: This game is the same as $\text{Exp}_{\text{\PCD}, \mathcal{A}}^{\text{ev-}q_e \text{-cd}}(\lambda, b)$.\\

\noindent $G_1(\lambda, b)$: This is identical to $G_0(\lambda, b)$, except that in Step~4, the key generated by the iKEM is replaced by an independently sampled random key:
\begin{enumerate}
    \item[$4^\prime$.] The challenger computes $(K, C_1) \leftarrow {\sf iK.Encap}(X)$. It then chooses $\hat{K} \leftarrow \mathcal{K}$ and evaluates $(C_2,vk) \leftarrow \text{\sf \SD.Encap}(m_b,\hat{K})$. Sets $CT = (C_1,C_2)$ and sends $CT$ to $\mathcal{A}$.
\end{enumerate}

\begin{proposition}
\label{prop: iKEM}
    If {\sf iKEM} {\sf iK} is $\text{IND-}q_e \text{-CEA}$ secure then,
    \[\left|\Pr\left[G_0(\lambda, b) = 1 \right] - \Pr\left[G_1(\lambda, b) = 1 \right]\right| \leq \text{Adv}_{\text{iK}, \mathcal{A}^\text{iK}}^{\text{ind-}q_e \text{-cea}}(\lambda)\]
\end{proposition}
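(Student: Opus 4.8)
The plan is to prove the proposition by a tight reduction to the $\text{IND-}q_e\text{-CEA}$ security of the iKEM. For a fixed bit $b$, I would build a two-stage iKEM adversary $\mathcal{A}^\text{iK}=(\mathcal{A}_1^\text{iK},\mathcal{A}_2^\text{iK})$, with $b$ hardwired, that runs the pHE-CD adversary $\mathcal{A}=(\mathcal{A}_1,\mathcal{A}_2)$ internally and plays $\text{IKIND}^{\text{cea-}b'}_{\text{iK},\mathcal{A}^\text{iK}}$ against the iKEM challenger $ch^\text{iK}$. The design target is that the hidden iKEM bit $b'$ selects which game is simulated: when $ch^\text{iK}$ returns the genuine encapsulated key ($b'=0$), the view of $\mathcal{A}$ is distributed exactly as $G_0(\lambda,b)$, and when it returns a fresh independent key ($b'=1$), the view is distributed exactly as $G_1(\lambda,b)$.

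First I would specify $\mathcal{A}_1^\text{iK}$. On input the side information $Z$, it forwards $Z$ to $\mathcal{A}_1$ and answers each of the $q_e$ pHE-CD encryption queries on a message $m_i$ by querying its own encapsulation oracle $O_1=\text{\sf iK.Encap}(X,\cdot)$ once to obtain a pair $(C_1^{(i)},K^{(i)})$, running $\text{\sf \SD.Encap}(m_i,K^{(i)})$ itself to produce $(vk^{(i)},C_2^{(i)})$, and returning the simulated answer $(vk^{(i)},(C_1^{(i)},C_2^{(i)}))$. This consumes exactly $q_e$ encapsulation queries, matching the CEA budget, and uses only quantities the reduction can compute on its own. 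When $\mathcal{A}_1$ outputs the challenge pair $(m_0,m_1)$, $\mathcal{A}_1^\text{iK}$ packages $\mathcal{A}_1$'s quantum internal state together with $m_0,m_1$ into the state register $st$ and halts.

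Next I would specify $\mathcal{A}_2^\text{iK}$, which receives the iKEM challenge $(C^*,K^*,st)$. It sets $C_1:=C^*$, computes $(vk,C_2)\gets\text{\sf \SD.Encap}(m_b,K^*)$, and resumes $\mathcal{A}_1$ on $CT=(C_1,C_2)$. After $\mathcal{A}_1$ emits a deletion certificate $cert$ and hands its state to $\mathcal{A}_2$, the reduction evaluates $\text{\sf \SD.Vrfy}(cert,vk)$: on output $\bot$ it forwards $\bot$ to $\mathcal{A}_2$, and on output $\top$ it forwards the same key $K^*$ that was used for encryption. Finally $\mathcal{A}_2$ returns a guess $b''$, and $\mathcal{A}_2^\text{iK}$ outputs $1$ precisely when verification returned $\top$ and $b''=1$, thereby replicating the output rule of the pHE-CD experiment; since $\mathcal{A}_2^\text{iK}$ is computationally unbounded, it can host the unbounded $\mathcal{A}_2$ and run the verification algorithm. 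The point that makes the reduction exact is that $K^*$ is used both as the DEM-CD encryption key and as the key revealed after successful verification: if $b'=0$, then $K^*$ is the genuine key encapsulated in $C^*=C_1$, so the transcript is identical to $G_0(\lambda,b)$; if $b'=1$, then $K^*$ is uniform and independent of $C_1$, matching the independent $\hat{K}$ that replaces the iKEM key throughout $G_1(\lambda,b)$. Consequently $\Pr[\text{IKIND}^{\text{cea-}0}_{\text{iK},\mathcal{A}^\text{iK}}=1]=\Pr[G_0(\lambda,b)=1]$ and $\Pr[\text{IKIND}^{\text{cea-}1}_{\text{iK},\mathcal{A}^\text{iK}}=1]=\Pr[G_1(\lambda,b)=1]$, and the claimed bound follows directly from the definition of $\text{Adv}^{\text{ind-}q_e\text{-cea}}_{\text{iK},\mathcal{A}^\text{iK}}(\lambda)$.

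The step I expect to be the main obstacle is arguing that the simulation is faithful across the two-phase structure of the iKEM game: $\mathcal{A}_1$ runs across the moment at which the iKEM challenge is produced, so its quantum state must be transported through $st$ and coherently resumed inside $\mathcal{A}_2^\text{iK}$, and one must check that $\mathcal{A}_1$ needs no oracle access after receiving $CT$. The latter holds because all $q_e$ encryption queries precede the challenge and the CEA second-stage oracle $O_2$ is empty. The accompanying subtlety is to justify that revealing $K^*$ after verification is exactly what $G_1$ prescribes, i.e.\ that ``replacing the iKEM key by a random key'' means replacing it everywhere it is used, including the post-verification reveal to $\mathcal{A}_2$, so that the $b'=1$ branch is a perfect simulation of $G_1$ with no residual statistical slack.
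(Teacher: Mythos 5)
Your proposal is correct and takes essentially the same route as the paper's own proof: the identical reduction that hands $Z$ to $\mathcal{A}_1$, answers its $q_e$ encryption queries by calling the encapsulation oracle and running the DEM-CD encapsulation locally, embeds the challenge pair $(C^*,K^*)$ as $C_1$ and as the key encrypting $m_b$, and reveals $K^*$ after successful verification, so that the hidden iKEM bit $0$/$1$ gives perfect simulations of $G_0(\lambda,b)$/$G_1(\lambda,b)$. Your small deviations only tighten the paper's bookkeeping: you hardwire $b$ where the paper has the reduction pick it at random (the proposition is stated per fixed $b$, so yours is cleaner), your output rule (return $1$ iff verification yields $\top$ and $\mathcal{A}_2$ guesses $1$) tracks the experiment's $\bot$-on-failure convention more explicitly than the paper's ``output whatever $\mathcal{A}_2$ outputs,'' and your reading of $G_1$ as revealing the \emph{same} random key used for DEM-CD encryption is exactly the interpretation the paper itself relies on in Proposition~\ref{prop: DEMCD}.
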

\begin{proof}
  If the adversary $\mathcal{A}$ can distinguish the games $G_0(\lambda,b)$ and $G_1(\lambda,b)$ with non-negligible advantage, we construct an adversary $\mathcal{A}^{\text{iK}}$ against {\sf iK} that breaks its $\text{IND-}q_e\text{-CEA}$ security with non-negligible advantage, using $\mathcal{A}$ as a subroutine. Here $\mathcal{A}^{\text{iK}}$ acts as the challenger for $\mathcal{A}$.

   \noindent \textbf{Construction of} $\mathcal{A}^{\text{iK}}$:
  \begin{itemize}
       \item $\mathcal{A}_1^{\text{iK}}$ receives $Z$ from the challenger $ch^\text{iK}$ of $\text{Exp}_{\text{iK}, \mathcal{A}^{\text{iK}}}^{\text{ind-}q_e \text{-cea}}(\lambda, b'')$, $b'' \in \{0,1\}$, and initializes the adversary $\mathcal{A}_1$ with the random variable $Z$.
       \item When $\mathcal{A}_1$ issues an encryption query on a message $m$, $\mathcal{A}_1^{\text{iK}}$ queries the encapsulation oracle to obtain a key--ciphertext pair $(K, C_1)$. It then uses the key $K$ to compute $(vk, C_2) \leftarrow \textsf{\SD.Encap}(K, m)$ and responds to $\mathcal{A}_1$'s query with $(vk, CT)$, where $CT = (C_1, C_2)$.
       \item At some point $\mathcal{A}_1^{\text{iK}}$ (the challenger of the simulated game) receives two messages $m_0$ and $m_1$ from $\mathcal{A}_1$. It asks for the challenge ciphertext--key pair from its own challenger and passes its state to $\mathcal{A}_2^{\text{iK}}$.
      \item According to a randomly chosen bit $b$, $\mathcal{A}_2^{\text{iK}}$ computes $(C_2^*,vk) \leftarrow \text{\sf \SD.{Encap}}(K^* , m_b)$, and sends $CT^*=(C_1^*, C_2^*)$ to $\mathcal{A}_1$ as the challenge ciphertext.
       \item At some point, $\mathcal{A}_1$ outputs a deletion certificate $cert$ together with an internal (quantum) state $\sigma$. The certificate $cert$ is forwarded to $\mathcal{A}_2^{\text{iK}}$, and the internal state of the adversary is passed to $\mathcal{A}_2$.
       \item $\mathcal{A}_2^{\text{iK}}$ runs the verification algorithm $\textsf{\SD.Vrfy}(vk, \mathit{cert})$. If the output is $\bot$, it provides $\bot$ to $\mathcal{A}_2$; otherwise, if the output is $\top$, it provides the key $K^*$ to $\mathcal{A}_2$.
       \item $\mathcal{A}_2$ outputs its guess bit $b' \in \{0, 1\}$ (for the chosen bit of $\mathcal{A}^{\text{iK}}$).
       \item Finally, $\mathcal{A}_2^{\text{iK}}$ outputs the same bit $b'$ to its own challenger.
   \end{itemize}
Note that if $b'' = 0$, ${CT}^* = (C_1^*, C_2^*)$ where $(C_1^*,K) = {\sf iK.Encap}(X)$ and $(C_2^*,vk)\leftarrow \text{\sf \SD.{Encap}}(m_b, K)$; hence $\mathcal{A}^{\text{iK}}$ perfectly simulates $G_0(\lambda,b)$. Similarly, if $b'' = 1$, $(C_2^*,vk) \leftarrow \text{\sf \SD.{Encap}}(m_b, \hat{K})$ for $\hat{K} \stackrel{\$}\gets \mathcal{K}$, so $\mathcal{A}^{\text{iK}}$ perfectly simulates $G_1(\lambda,b)$. Throughout, $\mathcal{A}_1^{\text{iK}}$ and $\mathcal{A}_2^{\text{iK}}$ perform no computation beyond running $\mathcal{A}_1, \mathcal{A}_2$ as a subroutine and relaying messages, so $\mathcal{A}^{\text{iK}} \in \mathcal{C}$ whenever $\mathcal{A} \in \mathcal{C}$. Hence,
    \[\left|\Pr\left[G_0(\lambda, b) = 1 \right] - \Pr\left[G_1(\lambda, b) = 1 \right]\right| \le \text{Adv}_{\text{iK}, \mathcal{A}^\text{iK}}^{\text{ind-}q_e \text{-cea}}(\lambda)\]
    Therefore, if $\mathcal{A}$ can distinguish between $G_0(\lambda, b)$ and $G_1(\lambda, b)$, then $\mathcal{A}^{\text{iK}}$ can break the IND-$q_e$-CEA security of {\sf iK}.
\end{proof}
\begin{proposition}
\label{prop: DEMCD}
    If DEM-CD scheme {\SD} is one-time EV-CD secure then,
    \[\left|\Pr\left[G_1(\lambda, 0) = 1 \right] - \Pr\left[G_1(\lambda, 1) = 1 \right]\right| \leq \text{Adv}_{\text{\SD}, \mathcal{A}^\text{\SD}}^{\text{ev-cd}}(\lambda)\]
\end{proposition}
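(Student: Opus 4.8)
The plan is to prove the bound by a direct reduction: I assume that $\mathcal{A}=(\mathcal{A}_1,\mathcal{A}_2)$ separates $G_1(\lambda,0)$ from $G_1(\lambda,1)$ and build a DEM-CD adversary $\mathcal{A}^{\text{D-CD}}=(\mathcal{A}_1^{\text{D-CD}},\mathcal{A}_2^{\text{D-CD}})$ that achieves the same advantage in the one-time EV-CD game of {\SD}, running $\mathcal{A}$ as a subroutine while itself playing challenger for the simulated $G_1$. The observation that makes this possible is that in $G_1$ the DEM-CD key $\hat{K}$ is sampled independently of everything the iKEM produces; consequently the entire iKEM layer can be generated locally by the reduction, while the challenge DEM-CD encryption is outsourced to the external challenger $ch^{\text{D.CD}}$, whose uniformly random secret key plays exactly the role of $\hat{K}$.

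For the setup and query phase, $\mathcal{A}_1^{\text{D-CD}}$ runs ${\sf iK.KeyGen}(1^\lambda,P)$ on its own to obtain $(X,Y,Z)$ and forwards $Z$ to $\mathcal{A}_1$. Each of the $q_e$ encryption queries on a message $m$ is answered internally by computing $(K,C_1)\gets{\sf iK.Encap}(X)$ and $(vk,C_2)\gets{\sf \SD.Encap}(m,K)$ and returning $(vk,(C_1,C_2))$; this reproduces the query phase of $G_1$ verbatim and, importantly, never calls $ch^{\text{D.CD}}$, so the one-time restriction of the EV-CD game is respected. When $\mathcal{A}_1$ submits the challenge pair $(m_0,m_1)$, the reduction relays it to $ch^{\text{D.CD}}$, receives the challenge ciphertext $C_2^*$ (an encryption of $m_b$ under the challenger's random key), locally draws a fresh $(K^*,C_1^*)\gets{\sf iK.Encap}(X)$, and hands $CT^*=(C_1^*,C_2^*)$ to $\mathcal{A}_1$. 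Because $C_2^*$ is encrypted under an independent uniform key, $CT^*$ is distributed exactly as the $G_1$ challenge.

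In the deletion phase, when $\mathcal{A}_1$ outputs $cert$ and passes its state forward, the reduction relays $cert$ to $ch^{\text{D.CD}}$ and passes $\mathcal{A}_1$'s state together with the locally known $K^*$ to $\mathcal{A}_2^{\text{D-CD}}$. The external challenger runs ${\sf \SD.Vrfy}$ and, on success, releases its own key; the reduction treats this solely as a verification-passed signal and then supplies $\mathcal{A}_2$ with the iKEM key $K^*$ (and with $\bot$ on failure). This is the step that requires care: the value released in $G_1$ after a valid deletion is the challenge iKEM key, not the DEM-CD key used for encryption, and the reduction is entitled to hand it over precisely because it generated $K^*$ itself while $\hat{K}$ was decoupled from the iKEM layer. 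Finally $\mathcal{A}_2^{\text{D-CD}}$ echoes $\mathcal{A}_2$'s guess, and I would make the reduction honor the same verification-conditioned output convention as $G_1$ (the guess counts only when ${\sf \SD.Vrfy}$ returns $\top$), giving $\Pr[G_1(\lambda,b)=1]=\Pr[\text{Exp}^{\text{ev-cd}}_{\text{\SD},\mathcal{A}^{\text{D-CD}}}(\lambda,b)=1]$ for $b\in\{0,1\}$. Subtracting the two cases and invoking the definition of $\text{Adv}^{\text{ev-cd}}_{\text{\SD},\mathcal{A}^{\text{D-CD}}}$ yields the claimed inequality.

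The main obstacle I anticipate is bookkeeping rather than algebra: making the simulation of $G_1$ perfect (not merely statistically close) requires aligning the two distinct key-release conventions and the verification-conditioned output exactly. A secondary check is that $\mathcal{A}_1^{\text{D-CD}}$ is genuinely QPT, which relies on the honest algorithms ${\sf iK.KeyGen}$ and ${\sf iK.Encap}$ being efficiently implementable even though iKEM security holds against unbounded adversaries.
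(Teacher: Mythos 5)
Your reduction has the same skeleton as the paper's (local generation of $(X,Y,Z)$, local generation of the iKEM challenge $(K^*,C_1^*)$, outsourcing only the challenge DEM-CD encryption to $ch^{\text{D-CD}}$, forwarding the certificate), and your explicit simulation of the $q_e$ encryption queries is a detail the paper's own write-up glosses over. However, the one step you yourself flag as delicate is resolved the opposite way from the paper, and this is a genuine gap rather than a stylistic variant. In the paper's reduction, upon a valid certificate, $\mathcal{A}_2^{\text{D-CD}}$ forwards to $\mathcal{A}_2$ the key it receives from the external challenger --- that is, the key playing the role of $\hat{K}$, under which $C_2^*$ was actually encrypted --- whereas you suppress that key, treat it as a mere verification signal, and hand $\mathcal{A}_2$ the locally generated iKEM key $K^*$. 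Your literal reading of step 6 of Definition~\ref{Def: pHE-CD sec} (the symbol $K$ carried over from step $4'$ of $G_1$) is understandable given the paper's sloppy notation, but it is not the hybrid the overall theorem needs: in the proof of Proposition~\ref{prop: iKEM}, the iKEM adversary simulates the release step by giving $\mathcal{A}_2$ its challenge key $K^*$, which in the $b''=1$ branch is exactly the random key used inside $\text{D-CD}.\textsf{Encap}$ and is \emph{not} the key encapsulated in $C_1^*$ (which that reduction does not know and cannot produce). So the $G_1$ consistent with both propositions is the one that releases $\hat{K}$, not the iKEM key.

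Under that intended $G_1$, your simulation is not perfect: in $G_1$ the unbounded $\mathcal{A}_2$ receives the very key under which $C_2^*$ was created and can exploit it on the classical portion of the ciphertext --- this is the entire content of ``everlasting'' security --- while in your simulation it receives a key consistent with $C_1^*$ and never sees $\hat{K}$. Hence your claimed equality $\Pr[G_1(\lambda,b)=1]=\Pr[\text{Exp}^{\text{ev-cd}}_{\text{D-CD},\mathcal{A}^{\text{D-CD}}}(\lambda,b)=1]$ fails, and what you actually bound is the adversary's advantage in a strictly weaker game in which the DEM key is never revealed. Conversely, if one adopted your reading of $G_1$ throughout, Proposition~\ref{prop: iKEM} would become unprovable for the reason above, collapsing the triangle-inequality chain in Theorem~\ref{Thm: pHECD compose}. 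The repair is one line: have $\mathcal{A}_2^{\text{D-CD}}$ pass the challenger's released key, rather than $K^*$, to $\mathcal{A}_2$, exactly as the paper does; the rest of your argument --- the one-time restriction being respected in the query phase, the verification-conditioned output convention, and the QPT check on $\mathcal{A}_1^{\text{D-CD}}$ --- then goes through unchanged.
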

\begin{proof}
   We construct an adversary $\mathcal{A}^\text{\SD}$ that breaks the one-time EV-CD security of {\sf \SD}, assuming $\mathcal{A}$ can distinguish between $G_1(\lambda, 0)$ and $G_1(\lambda, 1)$. Here $\mathcal{A}^{\text{\SD}}$ simulates the environment of the experiment $G_1(\lambda, b)$ and acts as the challenger for $\mathcal{A}$.

\noindent \textbf{Construction of} $\mathcal{A}^\text{\SD}$: $\mathcal{A}^\text{\SD}$ takes part in the experiment $\text{Exp}_{\text{\SD}, \mathcal{A}^\text{\sf \SD}}^{\text{ev-cd}}(\lambda, b)$:
\begin{itemize}
    \item $\mathcal{A}^\text{\SD}_1$ generates $(X,Y,Z) \leftarrow \text{\sf iK.Gen}(1^\lambda,P)$ and initiates $\mathcal{A}_1$ with $Z$.
    \item $\mathcal{A}_1$ outputs two messages $m_0,m_1$ to $\mathcal{A}^\text{\SD}_1$.
    \item $\mathcal{A}^\text{\SD}_1$ sends $m_0,m_1$ to its challenger $ch^\text{\SD}$.
    \item $ch^\text{\SD}$ chooses a random key $K \in \mathcal{K}$, computes $(C_2^*,vk) \leftarrow \text{\sf \SD.Encap}(K,m_b)$, and sends $C^*_2$ to $\mathcal{A}^\text{\SD}_1$.
    \item $\mathcal{A}^\text{\SD}_1$ generates $(C^*_1, K') \leftarrow {\sf iK.Encap}(X)$ and sends $(C^*_1, C^*_2)$ to $\mathcal{A}_1$.
    \item At some point, $\mathcal{A}_1$ sends a certificate $cert$ to $\mathcal{A}^\text{\SD}_1$, which forwards it to its challenger $ch^\text{\SD}$. $\mathcal{A}^\text{\SD}_1$ then passes its internal state to $\mathcal{A}^\text{\SD}_2$.
    \item If $cert$ is valid, $\mathcal{A}^\text{\SD}_2$ receives $K$ from $ch^\text{\SD}$, and forwards $K$ to $\mathcal{A}_2$.
    \item $\mathcal{A}_2$ outputs a bit $b' \in \{0,1\}$.
    \item $\mathcal{A}^\text{\SD}_2$ outputs whatever $\mathcal{A}_2$ outputs.
\end{itemize}
Note that the key $K'$ produced by $\text{\sf iK.Encap}$ is discarded and plays no role beyond producing a well-formed $C_1^*$, since $\mathcal{A}_1$'s view depends on the iKEM ciphertext but not on the key it encapsulates; this is exactly what makes $G_1$ (where the DEM-CD key is independent of the iKEM-encapsulated key) the right intermediate game to reduce to {\SD}'s security. When $b=0$, $C_2^*= \text{\SD.{Encap}}(K,m_0)$, so $\mathcal{A}^\text{\SD}$ perfectly simulates $G_1(\lambda,0)$; similarly, when $b=1$, $\mathcal{A}^\text{\SD}$ perfectly simulates $G_1(\lambda,1)$. As before, $\mathcal{A}_1^{\text{\SD}}$ relays $\mathcal{A}_1$ as a black box, so $\mathcal{A}_1^{\text{\SD}} \in \mathcal{C}$ whenever $\mathcal{A}_1 \in \mathcal{C}$, and $\mathcal{A}_2^{\text{\SD}}$ is unbounded as required by the definition of one-time EV-CD security. Therefore,
  \[\left|\Pr\left[G_1(\lambda, 0) = 1 \right] - \Pr\left[G_1(\lambda, 1) = 1 \right]\right| \le \text{Adv}_{\text{\SD}, \mathcal{A}^\text{\SD}}^{\text{ev-cd}}(\lambda)\]
In other words, if $\mathcal{A}$ can distinguish between $G_1(\lambda,0)$ and $G_1(\lambda,1)$, then $\mathcal{A}^\text{\sf \SD}$ can break one-time EV-CD security of {\SD}.
\end{proof}

From Propositions~\ref{prop: iKEM} and \ref{prop: DEMCD} we have,
\begin{align*}
\begin{multlined}[t]
\left|\Pr[G_0(\lambda,0)=1]-\Pr[G_0(\lambda,1)=1]\right| \\
\le\left|\Pr[G_0(\lambda,0)=1]-\Pr[G_1(\lambda,0)=1]\right| \\
+\left|\Pr[G_1(\lambda,0)=1]-\Pr[G_1(\lambda,1)=1]\right| \\
+\left|\Pr[G_1(\lambda,1)=1]-\Pr[G_0(\lambda,1)=1]\right|
\end{multlined} \\
\begin{multlined}[t]
=2\text{Adv}_{\text{iK},\mathcal{A}^\text{iK}}^{\text{ind-}q_e\text{-cea}}(\lambda)
+\text{Adv}_{\text{\SD},\mathcal{A}^\text{\SD}}^{\text{ev-cd}}(\lambda)
\end{multlined}
\le\text{negl}(\lambda).
\end{align*}
Since {\sf iK} is IND-$q_e$-CEA secure, $\text{Adv}_{\text{iK}, \mathcal{A}^\text{iK}}^{\text{ind-}q_e \text{-cea}}(\lambda) \le \text{negl}(\lambda)$ unconditionally, i.e., for $\mathcal{A}^\text{iK}$ ranging over all, possibly unbounded, adversaries. When {\SD} is one-time EV-CD secure against $\mathcal{C}=$ QPT adversaries (Construction~2), $\text{Adv}_{\text{\SD}, \mathcal{A}^\text{\SD}}^{\text{ev-cd}}(\lambda) \le \text{negl}(\lambda)$ holds for $\mathcal{A}^\text{\SD} \in \mathcal{C}$, giving computational EV-$q_e$-CD security of {\PCD}; when {\SD} is one-time EV-CD secure against $\mathcal{C}=$ unbounded adversaries (Construction~1), the same bound holds for unbounded $\mathcal{A}^\text{\SD}$, giving information-theoretic EV-$q_e$-CD security of {\PCD}. In either case, $\left|\Pr\left[G_0(\lambda, 0) = 1 \right] - \Pr\left[G_0(\lambda, 1)= 1 \right]\right| \le \text{negl}(\lambda)$, which implies that {\PCD} is EV-$q_e$-CD secure.
\end{proof}

\section{Constructions}
\label{sec: construction}
\vspace{-1mm}
In this section, we present two constructions: (1) a pHE-CD scheme achieving information-theoretic security for both encryption and certified deletion, and (2) a pHE-CD scheme providing computational security for encryption 
and everlasting certified deletion security.

\noindent \emph{Construction 1:}
It composes
(i) an IND-$q_e$-CEA–secure information-theoretic iKEM {\sf iK} = ({\sf iK.KeyGen, iK.Encap, iK.Decap}), formalized in
~\cite{sharifian2021information},
and (ii) an information-theoretically
IND-OT secure
DEM-CD \cite[Definition 2.11]{broadbent2020quantum}.
The iKEM can be realized using a one-way (one-message)
key agreement (SKA) protocol in correlated randomness setting \cite{sharifian2021information}, and the DEM-CD
can be instantiated directly by the one-time SKE-CD scheme of Broadbent and Islam (BI)~\cite{broadbent2020quantum}, which we denote by {\sf \SD} = ({\sf \SD.KeyGen}, {\sf \SD.Encap}, {\sf \SD.Decap}, {\sf \SD.Del}, {\sf \SD.Vrfy}) and treat as a black box satisfying the syntax of Definition~\ref{Def:SKECD}. BI's scheme provides information-theoretic IND-OT encryption security \cite[Definition~2.11]{broadbent2020quantum} and, as its main contribution, unconditional one-time certified deletion security~\cite{broadbent2020quantum}, which is exactly the notion captured by one-time EV-CD security (Definition~\ref{Def: SKE-CD EV-CD}) when both $\mathcal{A}_1$ and $\mathcal{A}_2$ are computationally unbounded. We refer the reader to \cite{broadbent2020quantum} for the internal construction of {\SD} (based on Wiesner coding combined with a one-time pad on the message and an authentication string on the encoding basis); unlike the one-bit scheme constructed for Construction~2 below, {\SD} here directly supports the unrestricted message space $\mathcal{M} = \{0,1\}^*$, so Construction~1 requires no bit-by-bit repetition.

Construction~1, {\sf \PCD} = ({\sf \PCD.KeyGen}, {\sf \PCD.Enc}, {\sf \PCD.Dec}, {\sf \PCD.Del}, {\sf \PCD.Vrfy}), is the direct instantiation of the generic pHE-CD construction of Figure~\ref{fig:pHE-CD} with the above {\sf iK} and {\SD}:
\begin{enumerate}
    \item {\sf \PCD.KeyGen}$(1^\lambda,P)\rightarrow (X,Y,Z)$:
    \begin{enumerate}
        \item Generate $(X,Y,Z) \leftarrow \text{\sf iK.KeyGen}(1^\lambda)$.
    \end{enumerate}
    \item {\sf \PCD.Enc}$(X, m) \rightarrow (vk, CT)$, for $m \in \{0,1\}^*$:
    \begin{enumerate}
        \item Generate $(K, C_{1}) \leftarrow \text{\sf iK.Encap}(X)$.
        \item Compute $(vk, C_{2}) \leftarrow \text{\sf \SD.Encap}(K, m)$.
        \item Output $CT= (C_1,C_2)$ and $vk$.
    \end{enumerate}
    \item {\sf \PCD.Dec}$(Y, CT) \rightarrow m \text{ or } \bot$:
    \begin{enumerate}
        \item Parse $CT= (C_{1},C_{2})$.
        \item Compute $K \leftarrow {\sf iK.Decap}(Y,C_{1})$.
        \item If $K = \bot$, return $\bot$. Else, compute $m \leftarrow \text{\sf \SD.Decap}(K,C_{2})$ and output $m$.
    \end{enumerate}
    \item {\sf \PCD.Del}$({CT}) \rightarrow cert$:
    \begin{enumerate}
        \item Parse $CT= (C_{1},C_{2})$.
        \item Compute $cert \leftarrow {\sf \SD.Del}(C_2)$.
    \end{enumerate}
    \item {\sf \PCD.Vrfy}$(vk, cert) \rightarrow \top \text{ or } \bot$:
    \begin{enumerate}
        \item Output $\text{\sf \SD.Vrfy}(vk, cert)$.
    \end{enumerate}
\end{enumerate}

\noindent \textbf{Security of Construction~1:}
\begin{theorem}
\label{Thm: Construction1: Enc}
    If {\sf iK} is information-theoretically $\text{IND-}q_e\text{-CEA}$ secure and {\SD} is information-theoretically IND-OT secure, then {\PCD} (Construction~1) achieves information-theoretic $\text{IND-}q_e\text{-CPA}$ encryption security.
\end{theorem}
The proof follows directly from Theorem~\ref{Thm: pHECD encryption}, taking both {\sf iK} and {\SD} to be information-theoretically secure.
\begin{theorem}
\label{Thm: Construction1: CD}
    If {\sf iK} is $\text{IND-}q_e\text{-CEA}$ secure and {\SD} is one-time EV-CD secure with information-theoretic security (as provided by BI's scheme~\cite{broadbent2020quantum}), then {\PCD} (Construction~1) achieves $\text{EV-}q_e\text{-CD}$ security against a computationally unbounded two-stage adversary $\mathcal{A} = (\mathcal{A}_1,\mathcal{A}_2)$; that is, Construction~1 achieves information-theoretic certified deletion security.
\end{theorem}
\begin{proof}
This is precisely the information-theoretic case ($\mathcal{C}=$ computationally unbounded algorithms) of Theorem~\ref{Thm: pHECD compose}, instantiated with the {\sf iK} and {\SD} of Construction~1. Revisiting the proof of Theorem~\ref{Thm: pHECD compose} with $\mathcal{C}$ set to the unbounded class: the reduction $\mathcal{A}^\text{iK}$ of Proposition~\ref{prop: iKEM} relays $\mathcal{A}_1$ as a black box and so is unbounded whenever $\mathcal{A}_1$ is, and $\text{Adv}_{\text{iK}, \mathcal{A}^\text{iK}}^{\text{ind-}q_e \text{-cea}}(\lambda) \le \text{negl}(\lambda)$ holds unconditionally, against any (possibly unbounded) $\mathcal{A}^\text{iK}$, since {\sf iK} is information-theoretically secure. Likewise, the reduction $\mathcal{A}^\text{\SD}$ of Proposition~\ref{prop: DEMCD} relays $\mathcal{A}_1$ as a black box and so is unbounded whenever $\mathcal{A}_1$ is, and since {\SD} (i.e., BI's scheme) is one-time EV-CD secure unconditionally, $\text{Adv}_{\text{\SD}, \mathcal{A}^\text{\SD}}^{\text{ev-cd}}(\lambda) \le \text{negl}(\lambda)$ holds against an unbounded $\mathcal{A}^\text{\SD}$ as well. Hence
{\small
\[
\begin{aligned}
\left|\Pr\left[G_0(\lambda, 0) = 1 \right]
- \Pr\left[G_0(\lambda, 1)= 1 \right]\right|
&\le 2\text{Adv}_{\text{iK}, \mathcal{A}^\text{iK}}^{\text{ind-}q_e\text{-cea}}(\lambda)
+ \text{Adv}_{\text{SD}, \mathcal{A}^\text{SD}}^{\text{ev-cd}}(\lambda) \\
&\le \text{negl}(\lambda).
\end{aligned}
\]
}
holds against any, possibly unbounded, two-stage adversary $\mathcal{A} = (\mathcal{A}_1,\mathcal{A}_2)$, which is precisely $\text{EV-}q_e\text{-CD}$ security of {\PCD} against an unbounded $\mathcal{A}_1$, i.e., information-theoretic certified deletion security.
\end{proof}
An important difference 
between 
the two constructions is that Construction 2 in its basic form is for a one-bit message (and can be extended to multi-bit messages by repeated application to each bit of the message),  
while Construction 1 is for multi-bit message.
%


\noindent \emph{Construction 2.} The construction uses i) an IND-$q_e$-CEA secure iKEM, {\sf iK} = ({\sf iK.KeyGen, iK.Encap, iK.Decap}) (see ~\cite{sharifian2021information} for definition and instantiation); and ii) a  computational  IND-OT secure
DEM, {\sf D = (D.KeyGen, D.{Encap}, D.Decap)}.

 We first construct a new computationally secure (quantum-resistant) DEM-CD (see below) for a one-bit message, and then use it to  construct a  pHE-CD for one-bit messages.
A multi-bit  pHE-CD can be constructed by 
%
encrypting each message bit independently using the single-bit pHE-CD  
and concatenating the resulting ciphertexts.

\noindent

{\textbf{Construction of a one-bit
 computationally secure {\SD} (everlasting deletion security):}}

\begin{enumerate}
    \item {\sf \SD.KeyGen}$(1^\lambda)\rightarrow K$: 
    \begin{enumerate}
        \item $K \leftarrow {\sf D.KeyGen}(1^\lambda)$
    \end{enumerate}
    \item {\sf \SD.Encap}$(K,m) \rightarrow (C,vk)$: 
    \begin{enumerate}
    \item Choose $x, \theta \leftarrow \{0,1\}^\lambda$.
       \item Compute $C = (\ket{x}_{\theta}, {\sf D.Encap}(K, (\theta, m \oplus \bigoplus_{i:\theta_{i}=0} x_{i})$ and set $vk=(x,\theta)$.
       \end{enumerate}
    \item {\sf \SD.Decap}$({C},K) \rightarrow m$:
    \begin{enumerate}
        \item Parse $C = (\ket{x}_{\theta}, C')$.
        \item Computes $(\theta,m') \leftarrow \text{\sf D.Decap}(K,C')$, measure $\ket{x}_{\theta}$ in the $\theta$    basis to obtain $x$ and outputs $m = m' \oplus \bigoplus_{i:\theta_i=0}x_i$ .
    \end{enumerate}
    \item {\sf \SD. {Del}}$(C) \rightarrow cert$: 
    \begin{enumerate}
        \item Parse $C = (\ket{x}_{\theta}, C')$. 
        \item Measure $\ket{x}_{\theta}$ in the Hadamard basis to obtain the string $cert \in \{0,1\}^\lambda$.
        
    \end{enumerate}
    
    \item {\sf \SD.Vrfy}$(vk, cert) \rightarrow \top \text{ or } \bot$: 
    \begin{enumerate}

        \item Parse $vk=(x,\theta)$ and $cert$ as $x'$.
        \item Output $\top$ if for all $i \in [\lambda]$, $cert_i = x_i'$.

    \end{enumerate}
     \end{enumerate}

\begin{theorem}
\label{Thm: DEM-CD: CD}
If {\sf D} is a DEM satisfying computationally IND-OT security, 
then 
{\SD} is a DEM-CD that  achieves computationally IND-OT encryption security and one-time EV-CD deletion security.
\end{theorem}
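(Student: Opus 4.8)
The plan is to prove everlasting certified deletion by isolating the single quantity on which the two challenge executions differ and then hiding it in two complementary ways: computationally during the bounded phase and information-theoretically during the unbounded phase. Writing $P \triangleq \bigoplus_{i:\theta_i=0} x_i$ for the parity of the computational-basis bits, the value passed to ${\sf D.Encap}$ is $y_b = m_b \oplus P$, and the experiments $\text{Exp}^{\text{ev-cd}}_{\text{\SD},\mathcal{A}}(\lambda,0)$ and $\text{Exp}^{\text{ev-cd}}_{\text{\SD},\mathcal{A}}(\lambda,1)$ of Definition~\ref{Def: SKE-CD EV-CD} are identical except that the classical plaintext $(\theta,y_b)$ encrypted under $K$ carries $y_0$ versus $y_1$; the quantum register $\ket{x}_\theta$ is sampled identically in both. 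Since $\mathcal{A}_2$ receives $K$ only after a valid certificate and can then recover $(\theta,y_b)$ by decryption, distinguishing $m_0$ from $m_1$ is equivalent to predicting $P$ from $\mathcal{A}_2$'s final view. First I would therefore reduce the theorem to the single claim that, conditioned on $\text{\sf \SD.Vrfy}$ outputting $\top$, the bit $P$ is statistically close to uniform and independent of that view.

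To establish this claim I would combine two ingredients. The computational ingredient is IND-OT security of ${\sf D}$ (Definition~\ref{def:demsec}): because $\mathcal{A}_1$ is QPT and never holds $K$ during its phase, the ciphertext ${\sf D.Encap}(K,(\theta,y_b))$ leaks only a negligible amount about the basis string $\theta$ to $\mathcal{A}_1$, so replacing it by an encryption of a fixed dummy plaintext of equal length alters $\mathcal{A}_1$'s behaviour only negligibly. This is essential, since an adversary that knew $\theta$ could measure the Hadamard positions ($\theta_i=1$) to forge a valid certificate while measuring the computational positions ($\theta_i=0$) to read off $P$; hiding $\theta$ at deletion time forecloses exactly this attack. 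The information-theoretic ingredient is the certified-deletion property of conjugate (Wiesner/BB84) coding: for a uniform, unknown $\theta$, any map from $\ket{x}_\theta$ to a certificate $cert$ and residual state $\sigma$ that makes $cert$ agree with $x$ on all Hadamard positions must leave $P$ with near-maximal min-entropy, and $P$ stays near-uniform and independent of $\sigma$ even once $\theta$, $y_b$ and $K$ are revealed. This is the entropic-uncertainty / EPR-purification argument of Broadbent--Islam, reused by Bartusek et al., and crucially it holds against a computationally unbounded adversary, so it governs $\mathcal{A}_2$. Combining the two via the triangle inequality would bound $\text{Adv}^{\text{ev-cd}}_{\text{\SD},\mathcal{A}}(\lambda)$ by the IND-OT advantage against ${\sf D}$ plus the negligible conjugate-coding deletion error.

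The hard part will be composing these ingredients cleanly across the boundary between the bounded $\mathcal{A}_1$ and the unbounded $\mathcal{A}_2$. A single reduction that swaps the ciphertext for a dummy and then runs the whole adversary does not work: to read off $\mathcal{A}_2$'s guess it would have to execute the unbounded $\mathcal{A}_2$, and an unbounded distinguisher can separate the residual state produced on the real ciphertext from the one produced on the dummy, so the computational indistinguishability supplied by IND-OT is not preserved when the state is handed to $\mathcal{A}_2$. The way I would resolve this is to confine the computational reduction to the deletion phase: the reduction to IND-OT runs only the QPT part $\mathcal{A}_1$ and outputs the efficiently checkable predicate ``$\text{\sf \SD.Vrfy}(vk,cert)=\top$'', so it stays QPT, and this transfers the hypothesis ``$\theta$ is effectively hidden when $cert$ is produced'' from the dummy game to the real game up to negligible error. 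Conditioned on that hypothesis, the residual advantage of $\mathcal{A}_2$ is then controlled entirely by the statistical conjugate-coding lemma, whose conclusion tolerates the later release of $K$. Getting this interleaving of quantifiers right --- invoking the computational assumption only where an efficient, $\mathcal{A}_2$-independent event is being tested, and deferring all unbounded reasoning to the information-theoretic lemma --- is the delicate point, and it is precisely what yields the everlasting (computational-before-deletion, unconditional-after) guarantee.
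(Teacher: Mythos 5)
Your two ingredients and their division of labour --- IND-OT security of {\sf D} to hide the basis string $\theta$ during the bounded phase, and the conjugate-coding deletion property to make the parity $P=\bigoplus_{i:\theta_i=0}x_i$ statistically unpredictable after a valid certificate, even once $K$ and $\theta$ are later revealed --- are exactly the ingredients of the paper's proof. The difference is that the paper does not re-derive their combination: it defines the quantum operation $\mathcal{Z}_\lambda$ whose classical part is ${\sf D.Encap}(K,(\theta,b'))$, observes that IND-OT security of {\sf D} makes $\mathcal{Z}_\lambda$ semantically secure in its first (basis) input, and then invokes Theorem~3.1 of Bartusek--Khurana as a black box to conclude that the post-verification residual states for $b=0$ and $b=1$ are negligibly close in trace distance; the operational meaning of trace distance then disposes of the unbounded $\mathcal{A}_2$, including the release of $K$. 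In effect you are re-proving the cited theorem rather than applying it, and the subtlety you flag about interleaving bounded and unbounded reasoning is precisely what that theorem packages.

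The one place your re-derivation is not sound as stated is the bridge between the two phases. Transferring only the probability of the event $\text{\sf \SD.Vrfy}(vk,cert)=\top$ from the dummy game to the real game does not transfer the statement you actually need, namely that $P$ is near-uniform given $\mathcal{A}_2$'s view conditioned on $\top$: ``$\theta$ is effectively hidden when $cert$ is produced'' is not an event whose indicator a reduction can output, and the statistical conjugate-coding lemma's hypothesis ($\theta$ information-theoretically independent of the deleter's view) is simply false in the real game, so the lemma cannot be ``applied conditioned on a computationally transferred hypothesis.'' In Bartusek--Khurana's proof the efficiently checkable quantity carried across the indistinguishability hybrid is richer than certificate validity alone: in the purified (EPR) experiment it couples the verification outcome with the challenger-side measurement outcomes that determine $P$ --- all computable by the QPT reduction, which knows $x$ and $\theta$ --- and an algebraic identity expresses the trace distance of the residual states through exactly this testable correlation, so the unbounded $\mathcal{A}_2$ is never executed inside any reduction. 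Either import that identity explicitly, or do what the paper does: state the hypothesis of the cited theorem (semantic hiding of the basis input, which your IND-OT observation supplies) and invoke its trace-distance conclusion; with that repair the remainder of your argument, including the equivalence of distinguishing $m_0$ from $m_1$ with predicting $P$ after $K$ is released, goes through as you describe.
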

\vspace{-2mm}

The proof strategy for deletion security is inspired by the certified deletion security proof of the one-bit
PKE-CD construction in \cite{bartusek2023cryptography}: we use a computationally secure DEM {\sf D}
to encrypt
(data-encapsulate) the randomness used in quantum encoding,
under the shared key. We give the full proof below, establishing the two claims of the theorem separately.

\begin{proof}
\noindent \emph{Encryption security.} We construct a QPT adversary $\mathcal{A}'$ against {\sf D} from any QPT adversary $\mathcal{A}$ against the IND-OT encryption security of {\SD}, such that $\text{Adv}_{\text{\SD},\mathcal{A}}^{\text{ind-ot}}(\lambda) = \text{Adv}_{\text{D},\mathcal{A}'}^{\text{ind-ot}}(\lambda)$. In ${\sf \SD.Encap}(K,m)$, the strings $x,\theta \leftarrow \{0,1\}^\lambda$ are sampled independently of $m$, and the only quantity passed to ${\sf D.Encap}$ is the pair $(\theta,\, m\oplus \bigoplus_{i:\theta_i=0}x_i)$; the qubit register $\ket{x}_\theta$ carries no further information about $m$, since $x$ itself is independent of $m$. Given $\mathcal{A}$'s challenge messages $m_0,m_1$, the reduction $\mathcal{A}'$ samples $x,\theta \leftarrow \{0,1\}^\lambda$ itself, forms $M_0 = (\theta, m_0\oplus \bigoplus_{i:\theta_i=0}x_i)$ and $M_1 = (\theta, m_1\oplus \bigoplus_{i:\theta_i=0}x_i)$, and submits $(M_0,M_1)$ to its own challenger. On receiving $C' \leftarrow {\sf D.Encap}(K,M_b)$, $\mathcal{A}'$ returns $C = (\ket{x}_\theta, C')$ to $\mathcal{A}$. Since $x,\theta$ are sampled independently of $b$, this is a perfect simulation of the {\SD} IND-OT experiment with bit $b$, and $\mathcal{A}'$ performs no further computation beyond sampling $(x,\theta)$ and running $\mathcal{A}$, so $\mathcal{A}'$ is QPT whenever $\mathcal{A}$ is. Hence $\text{Adv}_{\text{\SD},\mathcal{A}}^{\text{ind-ot}}(\lambda) = \text{Adv}_{\text{D},\mathcal{A}'}^{\text{ind-ot}}(\lambda) \le \text{negl}(\lambda)$, since {\sf D} is computationally IND-OT secure.

\noindent \emph{Deletion security.}
To prove the theorem, we follow the
proof strategy of
\cite[Theorem~3.1]{bartusek2023cryptography}.
To establish the one-time EV-CD security of {\SD} (Definition~\ref{Def: SKE-CD EV-CD}) conditioned on the
validity of the deletion certificate,
we define a quantum operation $\mathcal{Z}_\lambda(\cdot,\cdot,\cdot)$ with three inputs as follows:
\begin{enumerate}
    \item Sample $K \leftarrow {\sf D.KeyGen}(1^\lambda)$, $x, \theta \leftarrow \{0,1\}^\lambda$. 
    \item Compute $m'= m\oplus \bigoplus_{i:\theta_i = 0} x_i$ and the BB84 state $\ket{x}_\theta$. 
    \item Output $(\ket{x}_\theta,{\sf D.Encap}(K,(\theta,m')))$.
\end{enumerate}

Since
{\sf D} is an IND-OT secure DEM, it admits no encryption queries beyond the challenge. By Theorem~3.1 of Bartusek et al.~\cite{bartusek2023cryptography} together with the IND-OT security of {\sf D} (Definition~\ref{def:dem}), we
obtain,
  \[ \text{Adv}_{\text{\SD}, \mathcal{A}}^{\text{ev-cd}}(\lambda) \le \text{negl}(\lambda) \]
See Appendix~\ref{Appendix 1} for a restatement of the theorem {\cite[Theorem 3.1]{bartusek2023cryptography}} and a discussion of how the above inequality is obtained. Therefore {\SD} achieves one-time EV-CD security.
\end{proof}

{\bf A computationally secure \PCD~(everlasting deletion security)}
\begin{enumerate}
    \item {\sf \PCD.KeyGen}$(1^\lambda,P)\rightarrow (X,Y,Z)$: 
    \begin{enumerate}
        \item Generate $(X,Y,Z) \leftarrow \text{\sf iK.KeyGen}(1^\lambda)$.

    \end{enumerate}
     
    \item {\sf \PCD. Enc}$(X, m) \rightarrow (vk, CT)$: 
    \begin{enumerate}
    \item Generate $({K}, C_{1}) \leftarrow \text{\sf iK.Encap}(X)$.
        \item Choose $x, \theta \leftarrow \{0,1\}^\lambda$.
       \item Compute $C_{2} = (\ket{x}_{\theta}, {\sf D.Encap}(K, (\theta, m \oplus \bigoplus_{i:\theta_{i}=0} x_{i})$ and set $vk=(x,\theta)$.
      \item Output $CT= (C_1,C_2)$ and $vk$.
    
    \end{enumerate}
    
    \item {\sf \PCD.{Dec}}$(Y, CT) \rightarrow m \text{ or } \bot$: 
    \begin{enumerate}
        \item Parse $CT= (C_{1},C_{2})$ where $C_{2} = (\ket{x}_{\theta}, C'_{2})$.
        \item Computes $K \leftarrow {\sf iK.Decap}(C_{1},Y)$.
        \item If $K = \bot$; return $\bot$. Else computes $(\theta,m') \leftarrow \text{\sf D.Decap}(K,C_{2})$, measure $\ket{x}_{\theta}$ in the $\theta$    basis to obtain $x$ and outputs $m = m' \oplus \bigoplus_{i:\theta_i=0}x_i$. 
    \end{enumerate}
    \item {\sf \PCD. {Del}}$({CT}) \rightarrow cert$: 
    \begin{enumerate}
        \item Parse $CT= (C_{1},C_{2})$ where $C_{2} = (\ket{x}_{\theta}, C'_{2})$. 
        \item Measure $\ket{x}_{\theta}$ in the Hadamard basis to obtain the string $cert \in \{0,1\}^\lambda$.
        
    \end{enumerate}
    
    \item {\sf \PCD.Vrfy}$(vk, cert) \rightarrow \top \text{ or } \bot$: 
    \begin{enumerate}

        \item Parse $vk=(x,\theta)$ and $cert$ as $x'$.
        \item Output $\top$ if for all $i \in [\lambda]$, $cert_i = x_i'$.

    \end{enumerate}
\end{enumerate}

\noindent 

\noindent \textbf{Security of Construction~2:}
\vspace{-2mm}
\begin{theorem}
\label{Thm: Construction2: Enc}
    If {\sf iK} is information-theoretically $\text{IND-}q_e \text{-CEA}$ secure and {\sf D} is computationally IND-OT secure, then {\PCD} achieves computationally $\text{IND-}q_e \text{-CPA}$ encryption security.
\end{theorem}
\vspace{-2mm}
The proof follows directly from Theorem~\ref{Thm: pHECD encryption} and Theorem~\ref{Thm: DEM-CD: CD} by noting that the {\sf iK} and {\sf D} satisfy the requirements of the theorem.
\vspace{-1mm}
\begin{theorem}
\label{Thm: Construction2: CD}
    If {\sf iK} is information-theoretically $\text{IND-}q_e \text{-CEA}$ secure and {\sf D} is computationally IND-OT secure, then {\PCD} achieves $\text{EV-}q_e \text{-CD}$ security.
\end{theorem}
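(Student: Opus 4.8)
The plan is to combine the two composition results already established for pHE-CD with the component-level security of the specific DEM-CD built in Construction~2, so that Theorem~\ref{Thm: Construction2: CD} follows as an instantiation rather than a fresh argument. The central observation is that the generic composition Theorem~\ref{Thm: pHECD compose} states: if {\sf iK} is $\text{IND-}q_e\text{-CEA}$ secure and {\SD} is one-time EV-CD secure, then {\PCD} is $\text{EV-}q_e\text{-CD}$ secure. Construction~2 supplies an iKEM that is $\text{IND-}q_e\text{-CEA}$ secure by hypothesis, and a concrete DEM-CD built from an IND-OT secure DEM {\sf D} together with Wiesner/BB84 conjugate coding. Thus the only gap between the hypotheses of Theorem~\ref{Thm: Construction2: CD} (which assume {\sf D} is merely IND-OT secure) and those of Theorem~\ref{Thm: pHECD compose} (which require the DEM-CD {\SD} to be one-time EV-CD secure) is exactly bridged by Theorem~\ref{Thm: DEM-CD: CD}.

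Concretely, I would proceed in three short steps. First I would invoke Theorem~\ref{Thm: DEM-CD: CD}, which asserts that whenever {\sf D} is IND-OT secure, the DEM-CD {\SD} of Construction~2 achieves one-time EV-CD deletion security, i.e. $\text{Adv}_{\text{\SD},\mathcal{A}^\text{\SD}}^{\text{ev-cd}}(\lambda)\le\text{negl}(\lambda)$. Second, I would note that the iKEM {\sf iK} is $\text{IND-}q_e\text{-CEA}$ secure by the theorem's hypothesis. Third, with both hypotheses of Theorem~\ref{Thm: pHECD compose} now verified, I would apply that theorem directly to conclude that the composed {\PCD} is $\text{EV-}q_e\text{-CD}$ secure. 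Quantitatively, the bound from the composition proof reads
\begin{align*}
\text{Adv}_{\text{\PCD},\mathcal{A}}^{\text{ev-}q_e\text{-cd}}(\lambda)
\le 2\,\text{Adv}_{\text{iK},\mathcal{A}^\text{iK}}^{\text{ind-}q_e\text{-cea}}(\lambda)
+\text{Adv}_{\text{\SD},\mathcal{A}^\text{\SD}}^{\text{ev-cd}}(\lambda),
\end{align*}
and both terms on the right are negligible under the stated assumptions, giving the claimed negligibility of the left-hand side.

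I expect the main obstacle to be a bookkeeping one rather than a conceptual one: I must confirm that the DEM-CD instantiated inside Construction~2's {\PCD} is precisely the {\SD} whose EV-CD security Theorem~\ref{Thm: DEM-CD: CD} certifies, and that the key space and the one-bit message convention match on both sides so that the hybrid-game reduction of Theorem~\ref{Thm: pHECD compose} goes through verbatim. In particular I would check that the reduction in Proposition~\ref{prop: DEMCD}, which replaces the iKEM-encapsulated key by an independent random key, is compatible with the fact that Construction~2's DEM-CD takes $K$ from the iKEM and feeds it into {\sf D.Encap}; since one-time EV-CD security of {\SD} is defined against an independently sampled uniform key, the substitution performed in game $G_1$ is exactly what licenses the second reduction. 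Once this matching is explicit, the theorem is immediate, and I would remark that the extension from one-bit to multi-bit messages follows by the bitwise encryption and concatenation already described, with a standard hybrid argument incurring only a factor polynomial in the message length in the advantage.
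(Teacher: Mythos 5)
Your proposal is correct and follows essentially the same route as the paper's own proof: identify the $C_2$ component of Construction~2 with ${\SD}.{\sf Encap}(K,m)$, invoke Theorem~\ref{Thm: DEM-CD: CD} to upgrade IND-OT security of {\sf D} to one-time EV-CD security of {\SD}, and then apply the generic composition Theorem~\ref{Thm: pHECD compose} with the $\text{IND-}q_e\text{-CEA}$ hypothesis on {\sf iK}. Your additional bookkeeping remarks (matching key spaces, the independent-key substitution in $G_1$, and the bitwise extension to multi-bit messages) are consistent with, and slightly more explicit than, the paper's argument.
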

\begin{proof}
\vspace{-2mm}
The claim follows immediately as a corollary of
Theorem~\ref{Thm: pHECD compose} and Theorem~\ref{Thm: DEM-CD: CD}.
In particular, we may assume that $C_2 = \bigl(\ket{x}_{\theta},
{\sf D.Encap}\bigl(K,(\theta, m \oplus \bigoplus_{i:\theta_i = 0} x_i)\bigr)\bigr)
= \text{\sf \SD.Encap}(K,m).$ By Theorem~\ref{Thm: DEM-CD: CD}, it follows that {\SD} achieves one-time
EV-CD security. Applying Theorem~\ref{Thm: pHECD compose}, we conclude that
if {\sf iK} is $\text{IND-}q_e\text{-CEA}$ secure and {\SD} is one-time
EV-CD secure, then {\PCD} is $\text{EV-}q_e\text{-CD}$ secure.
This completes the proof.
\vspace{-2mm}
\end{proof}
\vspace{-2mm}
\section{Concluding Remarks}
\label{conclusion}
\vspace{-1mm}
We introduced and formalized 
pHE-CD and 
gave two constructions with proved security. Both constructions allow Bob to re-use his key, and the second construction uses constant-length key.

The quantum codings of these constructions are Wiesner conjugate coding \cite{wiesner1983conjugate} that can be efficiently implemented using today's quantum technologies.
The security of our construction is independent on computational hardness assumptions and instead relies on fundamental quantum principles, standard physical assumptions, and the optimality of quantum algorithms for unstructured search (e.g., Grover’s bound \cite{boyer1998tight,grover1996fast}), providing robustness against future computational advances.
The pHE-CD framework enables constructions based on QKD-derived iKEMs; extending qKEM security \cite{11195563} to adversaries with encapsulation-oracle access is left for future work.

\bibliographystyle{IEEEtran}
\bibliography{ref}

\section{Appendix}
\label{Appendix 1}

\begin{theorem}[{\cite[Theorem 3.1]{bartusek2023cryptography}}]
\label{thm:everlsting-security}
Let $\{Z_\lambda(\cdot,\cdot,\cdot)\}_{\lambda\in\mathbb{N}}$ be a quantum operation with three arguments taking as input
a $\lambda$-bit string $\theta$, a bit $b' \in \{0,1\}$, and a $\lambda$-qubit quantum register $\mathsf{A}$. Let $\mathcal{A}$ be a class of adversaries such that for all $\{A_\lambda\}_{\lambda\in\mathbb{N}} \in \mathcal{A}$, for any $\theta \in \{0,1\}^\lambda$, $b' \in \{0,1\}$, and any joint quantum state $\ket{\psi}^{\mathsf{A},\mathsf{C}}$ on a $\lambda$-qubit register $\mathsf{A}$ and an arbitrary auxiliary register $\mathsf{C}$, we have
\begin{equation*}
\begin{split}
\Pr\!\left[A_\lambda\!\left(Z_\lambda(\theta,b',\mathsf{A}),\mathsf{C}\right)=1\right]
-
\Pr\!\left[A_\lambda\!\left(Z_\lambda(0^\lambda,b',\mathsf{A}),\mathsf{C}\right)=1\right] \\
= \text{negl}(\lambda).
\end{split}
\end{equation*}

That is, $Z_\lambda$ is semantically secure against $\mathcal{A}$ with respect to its first input.

For any $\{A_\lambda\}_{\lambda\in\mathbb{N}} \in \mathcal{A}$, define the distribution
$\{Z^{A_\lambda}_\lambda(b)\}_{\lambda\in\mathbb{N},\, b\in\{0,1\}}$ over quantum states by the
following experiment:
\begin{itemize}
    \item Sample $x,\theta \leftarrow \{0,1\}^\lambda$ and initialize $A_\lambda$ with
    \[
    Z_\lambda\!\left(\theta,\, b \oplus \bigoplus_{i:\theta_i=0} x_i,\, \ket{x}_\theta \right).
    \]
    \item Let the output of $A_\lambda$ be parsed as a string $x' \in \{0,1\}^\lambda$ and a residual
    quantum state on register $\mathsf{A}'$.
    \item If $x_i = x'_i$ for all $i$ such that $\theta_i = 1$, output $\mathsf{A}'$; otherwise output a special
    symbol $\bot$.
\end{itemize}

Then,
\[
\mathrm{TD}\!\left(Z^{A_\lambda}_\lambda(0),\, Z^{A_\lambda}_\lambda(1)\right) = \text{negl}(\lambda).
\]
\end{theorem}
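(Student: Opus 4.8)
The plan is to prove Theorem~\ref{thm:everlsting-security} in two phases that cleanly separate the \emph{computational} hiding guaranteed by the hypothesis from the \emph{information-theoretic} conjugate-coding phenomenon that actually enforces deletion. In the first phase I would use the semantic-security premise to erase the dependence of the classical encoding on $\theta$; in the second phase I would show, now in a purely information-theoretic setting, that correctly answering the Hadamard-coordinate check destroys all information about the computational-coordinate parity that masks the challenge bit. Combining the two phases by the triangle inequality then yields $\mathrm{TD}(Z^{A_\lambda}_\lambda(0),Z^{A_\lambda}_\lambda(1)) = \text{negl}(\lambda)$.

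\emph{Phase 1 (remove $\theta$ from the first input).} First I would introduce a hybrid experiment $\widetilde{Z}^{A_\lambda}_\lambda(b)$ that is identical to $Z^{A_\lambda}_\lambda(b)$ except that the first argument fed to $Z_\lambda$ is replaced by $0^\lambda$, i.e.\ the adversary is initialized with $Z_\lambda(0^\lambda,\,b\oplus\bigoplus_{i:\theta_i=0}x_i,\,\ket{x}_\theta)$. Because the hypothesis bounds the distinguishing advantage of \emph{every} adversary in the class for \emph{every} fixed $(\theta,b')$ and \emph{every} joint input state, and because trace distance equals the optimal single-bit distinguishing advantage over that class, the two output ensembles differ by $\mathrm{TD}(Z^{A_\lambda}_\lambda(b),\widetilde{Z}^{A_\lambda}_\lambda(b)) = \text{negl}(\lambda)$ for each $b$; here $\ket{x}_\theta$ is absorbed into register $\mathsf{A}$ and the check positions into the auxiliary register $\mathsf{C}$. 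After this step the adversary's only $\theta$-dependence is carried by the BB84 register $\ket{x}_\theta$ together with the single masked bit $b\oplus\bigoplus_{i:\theta_i=0}x_i$ placed in the second slot of $Z_\lambda$.

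\emph{Phase 2 (the conjugate-coding core).} It then remains to bound $\mathrm{TD}(\widetilde{Z}^{A_\lambda}_\lambda(0),\widetilde{Z}^{A_\lambda}_\lambda(1))$ information-theoretically. Writing $P=\bigoplus_{i:\theta_i=0}x_i$, the two hybrids differ only in whether the second input is $P$ or $1\oplus P$. The claim I would establish is that, conditioned on the check $x'_i=x_i$ for all $i$ with $\theta_i=1$, the parity $P$ is statistically close to uniform and nearly decoupled from the residual register $\mathsf{A}'$, and moreover that the pass probability itself is essentially independent of $b$; together these imply that flipping $b$ leaves the sub-normalized conditional output states, as well as the aligned $\bot$ branches, negligibly perturbed. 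To prove the claim I would pass to the purified/EPR picture, deferring the basis-$\theta$ measurement that defines $x$, so that the pass event becomes a guessing/monogamy event on the Hadamard coordinates while $P$ is a computational-basis parity on the complementary coordinates. Since the adversary never learns $\theta$, an entropic uncertainty relation of BB84 type then forces the conditional min-entropy of $P$, given the residual state and the pass event, to be close to $1$, whence $P$ acts as a near-perfect one-time mask for $b$.

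\emph{Main obstacle.} The genuinely hard step is Phase~2: quantifying the uncertainty tradeoff so that correctly answering the Hadamard-coordinate check provably annihilates the computational-coordinate parity, uniformly over all (even unbounded) adversaries and despite the single correlated bit $b\oplus P$ handed to the adversary. The delicate bookkeeping is the post-selection, since the check is a non-trace-preserving projection whose success probability can a priori depend on $b$ through the adversary's $b'$-dependent strategy; I would therefore work throughout with sub-normalized conditional states and treat the $\bot$ outcome explicitly, first showing the pass probabilities for $b=0$ and $b=1$ coincide up to $\text{negl}(\lambda)$ and only then comparing the normalized residual states. In the application to Theorem~\ref{Thm: DEM-CD: CD}, Phase~1's hypothesis is exactly the IND-OT security of {\sf D} (which hides $\theta$ inside ${\sf D.Encap}(K,(\theta,\cdot))$), so this decomposition is precisely what upgrades computational encryption security into everlasting deletion security.
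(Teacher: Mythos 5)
You are attempting to prove a statement that the paper itself never proves: Theorem~\ref{thm:everlsting-security} is imported verbatim, with citation, from Bartusek--Khurana and is used purely as a black box (the appendix only explains how it is \emph{applied} to obtain Theorem~\ref{Thm: DEM-CD: CD}). Measured against the original argument, your overall two-phase shape (computational step plus conjugate-coding core) is the right instinct, but your Phase~1 contains a fatal error. The hypothesis bounds the distinguishing advantage only of adversaries \emph{in the class} $\mathcal{A}$, whereas trace distance is the advantage of the optimal \emph{unrestricted} measurement; your parenthetical justification ``trace distance equals the optimal single-bit distinguishing advantage over that class'' is false unless $\mathcal{A}$ contains all POVMs --- and in the intended instantiation ($Z_\lambda$ wrapping an AES-style IND-OT DEM, $\mathcal{A}$ the QPT adversaries) it does not. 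Concretely, consider the adversary that honestly measures all of $\ket{x}_\theta$ in the Hadamard basis (passing verification with probability $1$) and keeps the classical component ${\sf D.Encap}(K,(\theta,\cdot))$ untouched in its residual register. Then $Z^{A_\lambda}_\lambda(b)$ and your hybrid $\widetilde{Z}^{A_\lambda}_\lambda(b)$ differ precisely in whether the retained ciphertext encrypts $\theta$ or $0^\lambda$, and since the plaintext $(\theta, m')$ is longer than the $\lambda$-bit key, no IND-OT DEM can hide it \emph{statistically} (a Shannon-type bound); an unbounded distinguisher therefore separates the two hybrids with non-negligible --- for stream-cipher-style DEMs essentially unit --- advantage, so $\mathrm{TD}\bigl(Z^{A_\lambda}_\lambda(b), \widetilde{Z}^{A_\lambda}_\lambda(b)\bigr) \neq \text{negl}(\lambda)$. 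Indeed, if your Phase~1 were sound, it would yield everlasting hiding \emph{without ever using the deletion check}, which is impossible for the same reason; the entire content of the theorem is that the computational-to-statistical upgrade happens only after post-selecting on a valid certificate.

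The correct proof (Bartusek--Khurana's Theorem~3.1) never hybridizes in trace distance under a computational assumption: it purifies the BB84 states with EPR pairs on the challenger's side, defers the challenger's measurements, and invokes semantic security only to compare \emph{probabilities of efficiently measurable classical events} (certificate acceptance together with tests on the challenger's own registers) between the real world and the $\theta = 0^\lambda$ world; the statistical (trace-distance) conclusion is extracted only for the residual state post-selected on the collapse that a valid Hadamard-basis certificate forces on the challenger's halves. Your Phase~2 gestures at the right information-theoretic phenomenon, and your caution about sub-normalized conditional states and a possibly $b$-dependent pass probability is well placed; but that phase is, by your own admission, ``the genuinely hard step'' and is left entirely unproven, while the one step you do assert --- the Phase~1 trace-distance bound --- is false. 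As it stands, the proposal establishes neither phase and does not prove the theorem; any repair must restructure the argument so that the semantic-security hypothesis is consumed at the level of event probabilities inside the purified experiment rather than as a statistical closeness of output ensembles.
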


We now show that the trace-distance bound established above implies
one-time everlasting certified deletion security as defined in
Definition~\ref{Def: SKE-CD EV-CD}. To this end, we define a quantum
operation $\mathcal{Z}_\lambda(\cdot,\cdot,\cdot)$ with three inputs as
follows:
\begin{enumerate}
    \item Sample $K \leftarrow {\sf D.KeyGen}(1^\lambda)$ and
    $x,\theta \leftarrow \{0,1\}^\lambda$.
    \item Compute
    $m' := m \oplus \bigoplus_{i:\theta_i = 0} x_i$ and prepare the BB84
    state $\ket{x}_\theta$.
    \item Output the joint state
    $(\ket{x}_\theta, {\sf D.Encap}(K,(\theta,b')))$.
\end{enumerate}
Since ${\sf D}$ is an IND-OT secure data encapsulation mechanism (DEM),
it follows that $\mathcal{Z}_\lambda$ semantically hides its first input,
as required by Theorem~3.1 of~\cite{bartusek2023cryptography}.

Let $\rho_\lambda^b$ denote the joint quantum state consisting of the
ciphertext, verification information, and the adversary's internal
registers at the moment when the certificate is verified.
By the correctness of the certified deletion procedure, conditioned on
$\text{\sf D-CD.Vrfy}(cert,vk)=\top$, the resulting residual states
$\rho_\lambda^0$ and $\rho_\lambda^1$ satisfy
\[
\mathrm{TD}(\rho_\lambda^0,\rho_\lambda^1)=\text{negl}(\lambda).
\]
By the operational interpretation of trace distance, this implies that
for any (possibly computationally unbounded) distinguisher
$\mathcal{A}_2$,
\[
\bigl|
\Pr[\mathcal{A}_2(\rho_\lambda^0)=1]
-
\Pr[\mathcal{A}_2(\rho_\lambda^1)=1]
\bigr|
\le \mathrm{TD}(\rho_\lambda^0,\rho_\lambda^1)
= \text{negl}(\lambda).
\]
Therefore, even after receiving the secret key $K$, the second-stage
adversary $\mathcal{A}_2$ cannot distinguish whether $m_0$ or $m_1$ was
encrypted with more than negligible advantage. Consequently,
\[
\text{Adv}_{\text{D-CD},\mathcal{A}}^{\text{ev-cd}}(\lambda)
\le \text{negl}(\lambda),
\]
which establishes one-time everlasting certified deletion security of
{\sf D-CD}.

\end{document}